\title{Opportunistic Wireless Energy Harvesting in Cognitive Radio Networks
\thanks{S. Lee and R. Zhang are with the Department of Electrical and Computer Engineering, National University of Singapore, Singapore (email: \{elelees, elezhang\}@nus.edu.sg). R. Zhang is also with the Institute for Infocomm Research, A*STAR, Singapore.} 
\thanks{K. Huang is with the Department of Applied Mathematics, Hong Kong Polytechnic University, Hong Kong (email: huangkb@ieee.org).}
}
\author{Seunghyun Lee, Rui Zhang,~\IEEEmembership{Member,~IEEE}, and Kaibin Huang,~\IEEEmembership{Member,~IEEE}}
\def\E{\mathsf{E}}
\def\phi{\varphi}
\def\SINR{\mathsf{SINR}}
\def\l{\left}
\def\r{\right}
\def\({\left(}
\def\){\right)}
\def\b0{{\mathbf{0}}}
\def\mR{{\mathbb{R}}}
\def\cA{\mathcal{A}}
\def\cC{\mathcal{C}}
\def\cE{\mathcal{E}}
\def\cG{\mathcal{G}}
\def\cH{\mathcal{H}}
\newcommand{\Pout}{P_{\mathsf{out}}}
\newcommand{\nn}{\nonumber}
\begin{document}
\maketitle \thispagestyle{empty}

\begin{abstract}
Wireless networks can be self-sustaining by harvesting energy from ambient \emph{radio-frequency} (RF) signals. Recently, researchers have made progress on designing efficient circuits and devices for RF energy harvesting suitable for low-power wireless  applications. Motivated by this and building upon the classic cognitive radio (CR) network model, this paper  proposes a novel method for wireless networks coexisting where low-power mobiles in a secondary network, called \emph{secondary transmitters} (STs), harvest ambient RF energy from transmissions by nearby active transmitters in a primary network, called \emph{primary transmitters} (PTs), while opportunistically accessing  the spectrum licensed to the primary network. We consider a stochastic-geometry model in which PTs and STs are distributed as independent homogeneous Poisson point processes (HPPPs) and communicate with their intended receivers at fixed distances. Each PT is associated with a \emph{guard zone} to protect its intended receiver from ST's interference, and at the same time delivers RF energy to STs located in its \emph{harvesting zone}. Based on the proposed model, we analyze the transmission probability of STs and the resulting spatial throughput of the secondary network. The optimal transmission  power and density of STs are derived for maximizing the secondary network throughput under the given outage-probability constraints in the two coexisting networks, which reveal key insights to the optimal network design. Finally, we show that our analytical result can be generally applied to a non-CR setup, where distributed wireless power chargers are deployed to power coexisting wireless transmitters in a sensor network.
\end{abstract}

\begin{keywords}
Cognitive radio, energy harvesting, opportunistic spectrum access, wireless power transfer, stochastic geometry.
\end{keywords}

\newtheorem{definition}{\underline{Definition}}[section]
\newtheorem{fact}{Fact}
\newtheorem{assumption}{Assumption}
\newtheorem{theorem}{\underline{Theorem}}[section]
\newtheorem{lemma}{\underline{Lemma}}[section]
\newtheorem{corollary}{\underline{Corollary}}[section]
\newtheorem{proposition}{\underline{Proposition}}[section]
\newtheorem{example}{\underline{Example}}[section]
\newtheorem{remark}{\underline{Remark}}[section]
\newtheorem{algorithm}{\underline{Algorithm}}[section]
\newcommand{\mv}[1]{\mbox{\boldmath{$ #1 $}}}

\section{Introduction} \label{Sec:Intro}
\PARstart{P}owering mobile devices by harvesting energy from ambient sources such as solar, wind, and kinetic activities  makes wireless networks not only environmentally friendly but also self-sustaining. Particularly, it has been reported in the recent literature that harvesting energy from ambient radio-frequency (RF) signals can power a network of low-power devices such as wireless sensors \cite{Zungeru:RFEnergyHarvesingManagementWSN:2012, Le:EfficientFarFiledRFEnergyHarvesting:2008, Vullers:MicropowerEnergyHarvesting:2009, Bouchouicha:AmbientRFEnergyHarvesting:2010, Paing:ResistorEmulation:2008, Jabbar:RFEnergyHarvestingMobile:2010}.  In theory, the maximum power available for RF energy harvesting at a free-space distance of $40$ meters is known to be $7$uW and $1$uW for 2.4GHz and 900MHz frequency, respectively \cite{Zungeru:RFEnergyHarvesingManagementWSN:2012}. Most recently, Zungeru \emph{et al.} have achieved harvested power of $3.5$mW at a distance of $0.6$ meter and $1$uW at a distance of $11$ meters using Powercast RF energy-harvester operating at $915$MHz \cite{Zungeru:RFEnergyHarvesingManagementWSN:2012}. It is expected that more advanced technologies for RF energy harvesting will be available in the near future due to e.g. the rapid advancement in designing highly efficient rectifying antennas \cite{Vullers:MicropowerEnergyHarvesting:2009}. 

In this work, we investigate the impact of RF energy harvesting on the newly emerging cognitive radio (CR) type of networks. To this end, we propose a novel method for wireless networks coexisting where transmitters from a secondary network, called \emph{secondary transmitters} (STs), either opportunistically harvest RF energy from transmissions by nearby transmitters from a primary network, or transmit signals if these \emph{primary transmitters} (PTs) are sufficiently far away. STs store harvested energy in rechargeable batteries with finite capacity and apply the available energy for subsequent transmissions when batteries are fully charged. The throughput of the secondary network is analyzed  based on a stochastic-geometry model, where the PTs and STs are distributed according to independent homogeneous Poisson point processes (HPPPs). In this model, each PT is assumed to randomly access the spectrum with a given probability and each active (transmitting) PT is centered at a \emph{guard zone} as well as a \emph{harvesting zone} that is inside the guard zone. As a result, each ST harvests energy if it lies in the harvesting zone of any active PT, or transmits if it is outside the guard zones of all active PTs, or is idle otherwise. This model is applied to maximize the spatial throughput of the secondary network by optimizing key parameters including the ST transmit power and density subject to given PT transmit power and density, guard/harvesting zone radius, and outage-probability constraints in both the primary and secondary networks.

Our work is motivated by a joint investigation of the proposed conventional \emph{opportunistic spectrum access} and the newly introduced \emph{opportunistic energy harvesting} in CR networks, i.e., during the idle time of STs due to the presence of nearby active PTs, they can take such an opportunity to harvest significant RF energy from primary transmissions. Specifically, as shown in Fig.~\ref{Fig:NetworkModel}, each ST can be in one of the following three modes at any given time:  \emph{harvesting mode} if it is inside the harvesting zone of an active PT and not fully charged;  \emph{transmitting mode} if it is fully charged and outside the guard zone of all active PTs; and \emph{idle mode} if it is fully charged but inside any of the guard zones, or neither fully charged nor inside any of the harvesting zones.

{\begin{figure} 
\centering
\includegraphics[width=9cm]{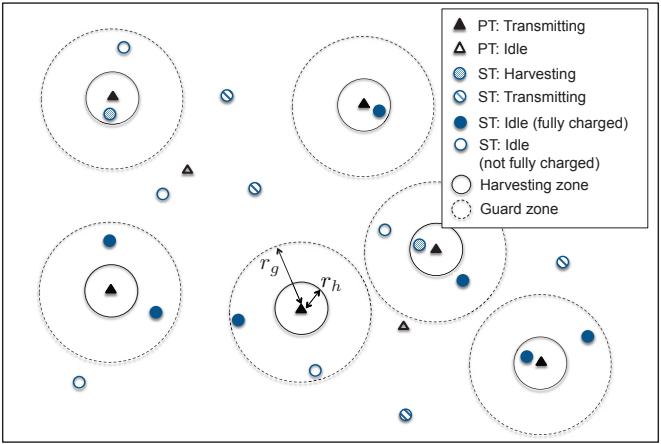} \vspace{1pt}
\caption{A wireless energy harvesting CR network in which PTs and STs are distributed as independent  HPPPs. Each PT/ST has its intended information receiver at fixed distances (not shown in the figure for brevity). ST harvests energy from a nearby PT if it is inside its  harvesting zone. To protect the primary transmissions, ST inside a guard zone is prohibited from transmission.}
\label{Fig:NetworkModel}
\end{figure}

\subsection{Related Work}
Recently, wireless communication powered by energy harvesting has emerged to be a new and active research area. However, due to energy harvesting, existing transmission algorithms for conventional wireless systems with constant power supplies (e.g., batteries) need to be redesigned to account for the new challenges such as random energy arrivals. For point-to-point wireless systems powered by energy harvesting, the optimal power-allocation algorithms have been designed and shown to follow modified water-filling by Ho and Zhang \cite{Ho&Zhang:OptimalEnergyAllocation:2012} and Ozel \emph {et al.} \cite{Ozel:EnergyHarvesting:2011}.  From a network perspective, Huang investigated the throughput of a mobile ad-hoc network (MANET) powered by energy harvesting  where the network spatial throughput is maximized by optimizing the transmit power level under an outage constraint \cite{Huang:ThroughputAdHocEnergyHarvesting:2012}. Furthermore, the performance of solar-powered wireless sensor/mesh networks has been analyzed in \cite{NiyatoHossainFallahi:Solar:2007}, in which various sleep and wakeup strategies are considered.

Among other energy scavenging sources such as solar and wind, background RF signals can be a viable new source for \emph{wireless energy harvesting} \cite{Brown:WirelessPowerTransfer:1984}. A new research trend on wireless power transfer is to integrate this technology with wireless communication. In \cite{Varshney:InformationEnergyTransfer:2008} and \cite{Grover:InformationEnergyTransfer:2010}, simultaneous wireless power and information transfer has been investigated, aiming at maximizing information rate and transferred power over single-antenna additive white Gaussian noise (AWGN) channels.  For broadcast channels, Zhang and Ho have studied multi-antenna transmission for simultaneous wireless information and power transfer with practical receiver designs such as time switching and power splitting \cite{Zhang:MIMOBroadcastingPowerTransfer}. Moreover, Zhou \emph{et al.} have proposed a new receiver design for enabling wireless information and power transmission at the same time, by judiciously integrating conventional information and energy receivers \cite{Zhou:WirelessPowerArchitectureDesign}. 
For point-to-point wireless systems, Liu \emph{et al.} have studied ``opportunistic'' RF energy harvesting where the receiver opportunistically harvests RF energy or decodes information subject to time-varying co-channel interference \cite{Liu:OpportunisticEnergyHarvesting}. More recently, Huang and Lau have proposed a new cellular network architecture consisting of power beacons deployed to deliver wireless energy to mobile terminals  and characterized the trade-off between the power-beacon density and cellular network spatial throughput \cite{Huang:WirelessPowerCellular}. 

In another track, the emerging CR technology enables efficient spectrum usage by allowing a secondary network to share the spectrum licensed to a primary network without significantly degrading its performance 
\cite{Haykin:CognitiveRadio:2005}. Besides active development of  algorithms
for opportunistic transmissions by secondary users (see e.g. \cite{QingZhao:DynamicSpectrumAccess:2007, RuiZhang:DynamicResourceAllocation:2010} and references therein), notable research has been pursued on characterizing the throughput of coexisting wireless networks based on the tool of stochastic geometry. For example, the capacity trade-offs between two or more coexisting networks sharing a common spectrum have been studied in \cite{Yin:ScalingLaws:2010, Huang:SpectrumSharing:2009, Lee:SpectrumSharingTxCapacity:2011}. Moreover, the outage probability of a Poisson-distributed CR network with guard zones  has been analyzed by Lee and Haenggi  \cite{Lee:PoissonCognitiveNetworks:2012}, where the secondary users opportunistically access the primary users' channel only when they are not inside any of the guard zones.

\subsection{Summary and Organization}
In this paper, we consider a CR network with time slotted transmissions and PT/ST locations modeled by independent HPPPs. The ST transmission power is assumed to be sufficiently small to meet the low-power requirement with RF energy harvesting. Under this setup, the main results of this paper are summarized as follows:

\begin{enumerate}

\item 
We propose a new CR network architecture where STs are powered by harvesting RF energy from active primary transmissions. We study the ST transmission probability as a function of ST transmit power in the presence of both guard zones and harvesting zones based on a Markov chain model. For the cases of single-slot and double-slot charging, we obtain the expressions of the exact ST transmission probability, while for the general case of multi-slot charging with more than two slots, we obtain the upper and lower bounds on the ST transmission probability. 

\item
With the result of ST transmission probability, we derive the outage probabilities of coexisting primary and secondary networks subject to their mutual interferences, based on stochastic geometry and a simplified assumption on the HPPP of transmitting STs with an effective density equal to the product of the ST transmission probability and the ST density. Furthermore, we maximize the spatial throughput of the secondary network under given outage constraints for the coexisting networks by jointly optimizing the ST transmission power and density, and obtain simple closed-form expressions of the optimal solution.

\item 
Furthermore, we show that our analytical result can be generally applied to even non-CR setups, where distributed wireless power chargers (WPCs) are deployed to power coexisting wireless information transmitters (WITs) in a sensor network, as shown in Fig.~\ref{Fig:NetworkModel_wo}. Practically, WPCs can be implemented as e.g. \emph{wireless charging vehicles} \cite{Xie:SensorNetworksImmortal:2012}, or fixed \emph{power beacons} \cite{Huang:WirelessPowerCellular} randomly deployed in a wireless sensor network. Based on our result for the CR network setup, we derive the maximum network throughput of such wireless powered sensor networks in terms of the optimal density and transmit power of WITs.

\end{enumerate}
{\begin{figure} 
\centering
\includegraphics[width=9cm]{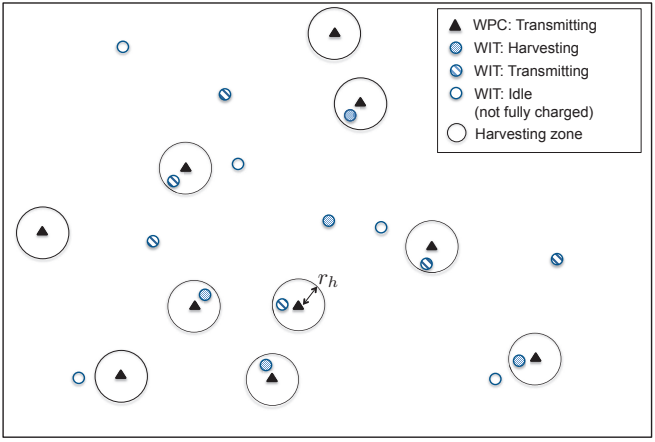} \vspace{1pt}
\caption{A wireless powered sensor network in which WPCs and WITs are distributed as independent HPPPs. Each WIT has intended receiver at a fixed distance (not shown in the figure for brevity). WIT harvests energy from a nearby WPC if inside its harvesting zone. Unlike the CR setup in Fig.~\ref{Fig:NetworkModel}, the guard zone is not applicable in this case, and thus a fully charged WIT can transmit at any time.}
\label{Fig:NetworkModel_wo}
\end{figure}

The remainder of this paper is organized as follows. Section~\ref{Sec:SysMod} describes the system model and performance metric. Section~\ref{Sec:TxProb} analyzes the transmission probability of energy-harvesting STs. Section~\ref{Sec:OutageProb} studies the outage probabilities in the primary and secondary networks. Section~\ref{Sec:NetThroughput} investigates the maximization of the secondary network throughput subject to the primary and secondary outage probability constraints. Section~\ref{Sec:Application} extends the result to the wireless powered sensor network setup. Finally, Section~\ref{Sec:Con} concludes the paper.

\section{System Model} \label{Sec:SysMod}

\subsection{Network Model}
As shown in Fig.~\ref{Fig:NetworkModel}, we consider a CR network in which  PTs and STs are distributed as independent HPPPs \footnote{ In general, transmitters' locations in cognitive radio networks may have non-homogeneous or even non-Poisson spatial distributions, which are difficult to characterize and not amenable to analysis. In this paper, we assume HPPP for transmitters' locations to obtain tractable analysis for the network performance. } with density $\lambda'_p$ and $\lambda_s$, respectively, with $\lambda'_p \ll \lambda_s$. It is assumed that time is slotted and each PT independently accesses the spectrum with probability $p$ at each time slot. Thus, the point process of active PTs forms another  HPPP with density $\lambda_p = p\lambda'_p$, according to the Coloring Theorem \cite{PoissonProcesses}, which varies independently over different slots. For convenience, we refer to active PTs simply as PTs in the rest of this paper. We denote the point processes of PTs and STs as $\Phi_p = \{X\}$ and $\Phi_s = \{Y\}$, respectively, where $X,Y \in \mR^2$ denote the coordinates of the PTs and STs, respectively. In addition, it is assumed that each PT/ST transmits with fixed power to its intended primary/secondary receiver (PR/SR) at distances $d_p$ and $d_s$, respectively, in random directions. We denote the fixed transmission power levels of PTs and STs as $P_p$ and $P_s$, respectively. We assume $P_p \gg P_s$ in this paper for energy harvesting applications of practical interest. 

STs access the spectrum of the primary network and thus their transmissions potentially interfere with PRs. To protect the primary transmissions,  STs are prevented from transmitting when they lie in any of the \emph{guard zones}, modeled as disks with a fixed radius centered at each PT. Specifically, let $b(T,x)\subset\mR^2$ represent a disk of radius $x$ centered at $T \in \mR^2$; then $b(X, r_g)$ denotes the guard zone with radius $r_g$ for protecting PT $X \in \Phi_p$. Define $\cG = \bigcup_{X \in \Phi_p}b(X,r_g)$ as the union of all PTs' guard zones; accordingly, an ST $Y \in \Phi_s$ cannot transmit if $Y \in \cG$. Note that in practice  the guard zone is usually centered at a PR rather than a PT as we have assumed, while our assumption is made to simplify our  analysis, similarly as in \cite{QingZhao:DynamicSpectrumAccess:2007}. We further assume $d_p \ll r_g$ to guarantee that guard zones centered at PTs (rather than PRs) will protect the primary transmissions properly. Under the above assumptions, the probability $p_g$ that a typical ST, denoted by $Y^\star$, does not lie in $\cG$ is equal to the probability that there is no PT inside the disk centered at $Y^\star$ with radius $r_g$, i.e., $b(Y^\star,r_g)$. Note that the number of PTs inside $b(Y^\star,r_g)$, denoted by $N$, is a Poisson random variable with mean $\pi r_g^2 \lambda_p$; thus, its probability mass function (PMF) is given by
\begin{equation} \label{Eq:PMF}
\Pr\{N=n\} = e^{-\pi r_g^2 \lambda_p}\frac{(\pi r_g^2 \lambda_p)^n}{n!}, \quad n=0,1,2,...
\end{equation}
Consequently, $p_g$ can be obtained as 
\begin{align}
p_g &= \Pr\{Y^\star \notin \cG\}  \label{Eq:GuardZoneDef}  \\
&= \Pr\{N = 0\} \label{Eq:GuardZoneRe} \\
&= e^{-\pi r_g^2 \lambda_p}. \label{Eq:GuardZoneRe:a}
\end{align}

We assume flat-fading channels with path-loss and Rayleigh fading; hence, the channel gains are modeled as exponential random variables. As a result, in a particular time slot, the signals transmitted from a PT/ST are received at the origin with power $g_X P_p |X|^{-\alpha}$ and $g_Y P_s |Y|^{-\alpha}$, respectively, where $\{g_X\}_{X\in\Phi_p}$ and $\{g_Y\}_{Y\in\Phi_s}$ are independent and identically distributed (i.i.d.) exponential random variables with unit mean, $\alpha >2$ is the path-loss exponent, and $|X|,|Y|$ denote the distances from node $X,Y$ to the origin, respectively.

\subsection{Energy-Harvesting Model} \label{Subsection:EHModel}
To make use of the RF energy as an energy-harvesting source, each RF energy harvester in an ST must be equipped with a  \emph{power conversion circuit} that can extract DC power from the  received electromagnetic waves \cite{Le:EfficientFarFiledRFEnergyHarvesting:2008}. Such circuits in practice have certain sensitivity requirements, i.e., the input power needs to be larger than a predesigned threshold for the circuit to harvest RF energy efficiently. This fact thus motivates us to define the \emph{harvesting zone}, which is a disk with radius $r_h$ centered at each PT $X\in \Phi_p$ with $r_h \ll r_g$. The radius $r_h$ is determined by the energy harvesting circuit sensitivity for a given $P_p$, such that only STs inside a harvesting zone can receive power larger than the energy harvesting threshold, which is given by $P_p r_h^{-\alpha}$. The power received by an ST outside any harvesting zone is too small to activate the energy harvesting circuit, and thus is assumed to be negligible in this paper.

Let $b(X, r_h)$ represent the harvesting zone centered at PT $X\in\Phi_p$ such that an ST $Y$ can harvest energy from one or more PTs if $Y \in \cH$, where $\cH = \bigcup_{X\in \Phi_p}b(X,r_h)$ denotes the union of the harvesting zones of all PTs. The probability $p_h$ that a typical ST $Y^\star$ lies in $\cH$ is equal to the probability that there is at least one PT inside the disk $b(Y^\star,r_h)$. Similar to \eqref{Eq:PMF}, the number of PTs inside $b(Y^\star,r_h)$, denoted by $K$, is a Poisson random variable with mean $\pi r_h^2 \lambda_p$ and PMF given by
\begin{equation}
\Pr\{K=k\} = e^{-\pi r_h^2 \lambda_p}\frac{(\pi r_h^2 \lambda_p)^k}{k!}, \quad k=0,1,2,...
\end{equation}
Accordingly, $p_h$  is given by
\begin{align}
p_h & = \Pr\{Y^\star \in \cH\} \\
& = \Pr\{K \geq 1\} \label{Eq:HarvestingZoneDef}\\
& = \sum_{k=1}^{\infty} e^{-\pi r_h^2 \lambda_p}\frac{(\pi r_h^2 \lambda_p)^k}{k!} \label{Eq:p_hSum}\\
& = 1-e^{-\pi r_h^2 \lambda_p} \label{Eq:p_h}.
\end{align}
Since $\lambda_p$ and $r_h$ are both practically small, we can assume $\pi r_h^2 \lambda_p \ll 1$. Thus, $p_h$ given in \eqref{Eq:p_hSum} can be approximated as $\Pr\{K=1\}$ by ignoring the higher-order terms with $k>1$. Therefore,  when $Y^\star \in \cH$, $Y^\star$ is inside the harvesting zone of one single PT most probably, which equivalently means that the harvesting zones of different PTs do not overlap at most time. As a result, the amount of average power harvested by $Y^\star \in \cH$ in a time slot can be lower-bounded by $\eta P_p R^{-\alpha}$ where $R \leq r_h$ denotes the distance between $Y^\star$ and its nearest PT, and $0<\eta < 1$ denotes the harvesting efficiency. Note that the harvested power has been averaged over the channel short-term fading within a slot. 

\subsection{ST Transmission Model}
We assume that each ST has a battery of finite capacity equal to the minimum energy required for one-slot transmission with power $P_s$ for simplicity. Upon the battery being fully charged, an ST will transmit with all stored energy in the next slot if it is outside all the guard zones. We denote the probability that $Y^\star$ has been fully charged at the beginning of a time slot as $p_f$ and the probability that it will be able to transmit in this slot as $p_t$. As mentioned above, the point process of PTs $\Phi_p$ varies independently over different slots, and thus the events that an ST has been fully charged in one slot and that it is outside all the guard zones in the next slot are independent. Consequently, $p_t$ can simply be obtained as
\begin{equation} \label{Eq:TxProb:Def}
p_t = p_f p_g,
\end{equation}
where $p_g$ is given in \eqref{Eq:GuardZoneRe:a}, and $p_f$ will be derived in Section~\ref{Sec:TxProb}.

\subsection{Performance Metric}
For both PRs and SRs, the received signal-to-interference-plus-noise ratio (SINR) is required to exceed a given target for reliable transmission. Let $\theta_p$ and $\theta_s$ be the target SINR for the PR and SR, respectively. The outage probability  is then defined as $\Pout^{(p)}=\Pr\{\SINR^{(p)}<\theta_p\}$ for the primary network  and  $\Pout^{(s)}=\Pr\{\SINR^{(s)}<\theta_s\}$ for the secondary network. The outage-probability constraints are applied such that  $\Pout^{(p)}\leq \epsilon_p$ and $\Pout^{(s)}\leq \epsilon_s$ with given $0 < \epsilon_p, \epsilon_s <1$. Note that the transmitting STs in general do not form an HPPP due to the presence of guard zones and energy harvesting zones, but their average density over the network is given by $p_t\lambda_s$. Accordingly, given fixed PT density $\lambda_p$ and transmission power $P_p$, the performance metric of the secondary network is the spatial throughput $\cC_s$ (bps/Hz/unit-area) given by
\begin{equation} 
\cC_s = p_t \lambda_s \log_2(1+\theta_s), \label{Eq:NetThroughput}
\end{equation} 
under the given primary/secondary outage probability constraints $\epsilon_p$ and $\epsilon_s$.

\section{Transmission Probability of Secondary Transmitters} \label{Sec:TxProb}

In this section, the transmission probability of a typical ST $p_t$ given in \eqref{Eq:TxProb:Def} is analyzed using the Markov chain model. For convenience, we define $M$ as the maximum number of energy-harvesting time slots required to fully charge the  battery of an ST. Since the minimum power harvested by an ST in one slot is $\eta P_p r_h^{-\alpha}$, which occurs when the ST is at the edge of a harvesting zone, it follows that $M = \l\lceil\frac{P_s}{\eta P_p r_h^{-\alpha}}\r\rceil$, where $\lceil x \rceil$ denotes the smallest integer larger than or equal to $x$. Note that  $M=1$ corresponds to the case where the battery is fully charged within one slot time; thus this case is  referred to as \emph{single-slot charging}. Similarly, the case of $M=2$ is referred to as \emph{double-slot charging}. It will be shown in this section that if $M=1$ or $M=2$, the battery power level can be exactly modeled by a finite-state Markov chain; hence, the transmission probability $p_t$ can be obtained. However, for \emph{multi-slot charging} with $M>2$, only upper and lower bounds on $p_t$ are obtained based on the Markov chain analysis for the case of $M=2$.

\subsection{Single-Slot Charging ($M=1$)}
If $0< P_s \leq \eta P_p r_h^{-\alpha}$, the battery of an ST is fully charged within a slot, i.e., $M=1$. It thus follows that the battery power level can only be either 0 or $P_s$ at the beginning of each slot. Consider the finite-state Markov chain with state space $\{0,1\}$ with states $0$ and $1$ denoting the battery level of power $0$ and $P_s$, respectively. Furthermore, let $\mathbf{P}_1$ represent the state-transition probability matrix that can be obtained as
\begin{equation} \label{Eq:Matrix:Case1}
\mathbf{P}_1=\l[\begin{array}{cc}
1-p_h & p_h \\ p_g & 1-p_g  \end{array}\r]
\end{equation}
with $p_g$ and $p_h$ given in \eqref{Eq:GuardZoneRe:a} and \eqref{Eq:p_h}, respectively. Then $p_t$ can be obtained by finding the steady-state probability of the assumed Markov chain, as given in the following proposition. 
\begin{proposition} \label{Prop:TxProb:Case1}
If $0<P_s \leq \eta P_p r_h^{-\alpha}$ or $M=1$ (single-slot charging), the transmission probability of a typical ST is given by
\begin{align} 
p_t &= \frac{p_h}{p_h+p_g} p_g \label{Eq:TxProb:Case1_2}\\
\label{Eq:TxProb:Case1} & = \frac{(1-e^{-\pi r_h^2 \lambda_p})e^{-\pi r_g^2 \lambda_p}}{1-e^{-\pi r_h^2 \lambda_p} + e^{-\pi r_g^2 \lambda_p}}.
\end{align}
\end{proposition}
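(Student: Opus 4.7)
The plan is to exploit the two-state Markov chain $\{0,1\}$ already introduced in the excerpt, where state $0$ represents an empty battery and state $1$ represents a fully charged battery, and to recover $p_t$ via $p_t = p_f \, p_g$ in \eqref{Eq:TxProb:Def} by computing the stationary probability of state $1$.

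First I would justify the transition matrix $\mathbf{P}_1$ in \eqref{Eq:Matrix:Case1}. From state $0$, a full charge is acquired in the current slot precisely when $Y^\star\in\cH$, which happens with probability $p_h$ (using $M=1$, one slot suffices); otherwise the battery stays empty. From state $1$, the battery is depleted exactly when the ST transmits, which occurs iff $Y^\star\notin\cG$, an event of probability $p_g$; otherwise it remains fully charged. Crucially, since $\Phi_p$ is redrawn independently across slots (per the assumption in Section \ref{Sec:SysMod}), the zone-indicator events in consecutive slots are independent, so the battery-level process is a genuine homogeneous Markov chain.

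Next I would solve for the stationary distribution $\pi=(\pi_0,\pi_1)$. The chain is irreducible and aperiodic whenever $0<p_h,p_g<1$, so a unique stationary distribution exists. Writing $\pi \mathbf{P}_1 = \pi$ gives $\pi_1 p_g = \pi_0 p_h$, and combining with $\pi_0+\pi_1=1$ yields $\pi_1 = \frac{p_h}{p_h+p_g}$. Identifying $p_f = \pi_1$ and substituting into \eqref{Eq:TxProb:Def} produces \eqref{Eq:TxProb:Case1_2}. Substituting the closed-form expressions for $p_g$ from \eqref{Eq:GuardZoneRe:a} and $p_h$ from \eqref{Eq:p_h} and simplifying the denominator $p_h+p_g = 1-e^{-\pi r_h^2 \lambda_p} + e^{-\pi r_g^2 \lambda_p}$ then gives \eqref{Eq:TxProb:Case1}.

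The only subtle step is the independence claim underpinning the Markov property: strictly speaking, the guard-zone and harvesting-zone indicators in a single slot are correlated (since $b(Y^\star,r_h)\subset b(Y^\star,r_g)$), but this correlation is immaterial here because these two events act in different states of the chain (harvesting only from state $0$, transmission only from state $1$). What is required is independence \emph{across} slots, and that follows from the independent resampling of the active PT process $\Phi_p$. Everything after that is the standard two-state stationary-distribution calculation and a routine algebraic substitution.
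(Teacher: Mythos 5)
Your proposal is correct and follows essentially the same route as the paper: set up the two-state Markov chain with transition matrix $\mathbf{P}_1$, solve $\boldsymbol{\pi}_1\mathbf{P}_1=\boldsymbol{\pi}_1$ to get $\pi_{1,1}=\frac{p_h}{p_h+p_g}$, identify $p_f=\pi_{1,1}$, and conclude via $p_t=p_f p_g$. Your added remarks on cross-slot independence and irreducibility are sound elaborations of steps the paper leaves implicit, not a different argument.
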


\begin{proof}
Let the steady-state probability of the two-state Markov chain be denoted by $\boldsymbol{\pi}_1=[\pi_{1,0},\pi_{1,1}]$, where  $\boldsymbol{\pi_1}$ is the left eigenvector of $\mathbf{P}_1$ corresponding to the unit eigenvalue such that
\begin{equation} \label{Eq:SteadyState}
\boldsymbol{\pi}_1\mathbf{P}_1=\boldsymbol{\pi}_1.
\end{equation}
From \eqref{Eq:SteadyState}, the steady-state distribution of the battery power level at a typical ST is obtained as 
\begin{equation}
\pi_{1,0} = \frac{p_g}{p_h+p_g}, \quad \pi_{1,1} = \frac{p_h}{p_h+p_g}.
\end{equation}
Note that the probability that an ST is fully charged at the beginning of each slot as defined in \eqref{Eq:TxProb:Def} is $p_f = \pi_{1,1}$ in this case. Consequently, from \eqref{Eq:TxProb:Def}, the desired result in \eqref{Eq:TxProb:Case1_2} is obtained.
\end{proof}

It is observed from \eqref{Eq:TxProb:Case1} that in the single-slot charging case, $p_t$ depends only on $\lambda_p$, $r_h$ and $r_g$, but is not related to $P_s$. The reason is that the battery of an ST is guaranteed to be fully charged over one slot if it gets into a harvesting zone; hence, the probability that an ST is fully charged $p_f = \pi_{1,1} = \frac{p_h}{p_h+p_g}$ does not depend on $P_s$. 

\subsection{Double-Slot Charging ($M=2$)}
\label{Subsection:Case2}
If $\eta P_p r_h^{-\alpha} < P_s \leq 2\eta P_p r_h^{-\alpha}$ or $M=2$, an ST needs at most 2 slots of harvesting to make the battery fully charged. To establish the Markov chain model for this case, we divide the harvesting zone $b(X, r_h)$ into two disjoint regions, $b(X, h_1)$ and $a(X, h_1, r_h)$, where $h_1 = \l(\frac{P_s}{\eta P_p}\r)^{-\frac{1}{\alpha}}<r_h$ and $a(T, x, y) = b(T,y) \backslash b(T,x)$ denotes the annulus with radii $0<x<y$ centered at $T\in \mR^2$.  It then follows that the region $b(X, h_1)$ consists of the locations at which the power harvested by a typical ST $Y^\star$ from PT $X$ is greater than or equal to $P_s$ (i.e., single-slot charging region), while the region $a(X, h_1, r_h)$ corresponds to the locations at which the power harvested by $Y^\star$ is greater than or equal to $\frac{1}{2}P_s$ but smaller than $P_s$ (see Fig.~\ref{Fig:Division1}).  For convenience, we define $\cH_1 = \bigcup_{X\in\Phi_p}b(X, h_1)$ and $\cH_2 = \bigcup_{X\in\Phi_p}a(X, h_1, r_h)$. Note that $\cH = \cH_1 \cup \cH_2$. We reasonably assume that $\cH_1$ and $\cH_2$ are disjoint  since the harvesting zones are most likely disjoint as mentioned in Section~\ref{Subsection:EHModel}.

\begin{figure}
\centering
\includegraphics[width=7cm]{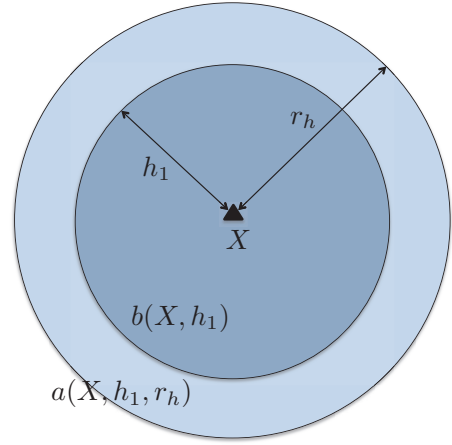}\vspace{5pt} 
\caption{Divided harvesting zone for the case of double-slot charging ($M=2$).}
 \label{Fig:Division1}
\end{figure}

Consider a 3-state Markov chain with state space $\{0,1,2\}$. Since the battery power level can only be either $0$ or in the range $[\frac{1}{2}P_s, P_s]$ since $\eta P_p r_h^{-\alpha} \geq \frac{1}{2}P_s$ in this case, we define state $0$ as the battery level of power 0, state $1$ with the power level in the range $[\frac{1}{2}P_s, P_s)$, and state $2$ with the power level equal to $P_s$. Note that in order to transit from state $0$ to $1$, $0$ to $2$, and $1$ to $2$, the harvested power at $Y^\star$ needs to be $\frac{1}{2}P_s \leq \eta P_p R^{-\alpha} < P_s$, $\eta P_p R^{-\alpha} \geq P_s$, and $\eta P_p R^{-\alpha} \geq \frac{1}{2}P_s$, respectively (or equivalently $Y^\star$ needs to be inside $\cH_2$, $\cH_1$, and $\cH$, respectively). Thanks to the fact that the minimum charging power is always larger than or equal to $\frac{1}{2}P_s$ in this case, we can determine the probability of the transition from state $1$ to $2$, i.e., from the battery power level in the range of $[\frac{1}{2}P_s,P_s)$ to $P_s$, which occurs when $Y^\star$ is (anywhere) inside a harvesting zone (see Fig.~\ref{Fig:Battery}). Accordingly, the state-transition probability matrix for the assumed 3-state Markov chain (see Fig.~\ref{Fig:MarkovChain}) is given as 
\begin{equation} \label{Eq:Matrix:Case2}
\mathbf{P}_2=\l[\begin{array}{ccc}
1-p_h & p_{2} & p_{1}\\0 & 1-p_h & p_h  \\ p_g & 0 & 1-p_g\end{array} \r],
\end{equation}
where $p_1 = \Pr\{Y^\star \in \cH_1\}$ and $p_2 = \Pr\{Y^\star\in \cH_2\}$. Notice that $p_1 + p_2 = p_h = 1-e^{-\pi r_h^2 \lambda_p}$, since $\cH_1 \cup \cH_2 = \cH$ and we have assumed that $\cH_1$ and $\cH_2$ are disjoint sets.  
Similarly to \eqref{Eq:HarvestingZoneDef}, the probability $p_1$ is given as
\begin{align}
p_1 & = \Pr\{Y^\star \in \cH_1\} \\
& = 1-e^{-\pi h_1^2 \lambda_p} \label{Eq:p_1},
\end{align}
and $p_2$ is given as
\begin{align}
p_2 & = p_h - p_1 \\
& = e^{-\pi h_1^2 \lambda_p} - e^{-\pi r_h^2 \lambda_p} \label{Eq:p_2}.
\end{align}
Then we can obtain $p_t$ for this case as given in the following proposition.

\begin{figure}
\centering
\subfigure[Battery power state of ST]{
\centering
\includegraphics[width=7cm]{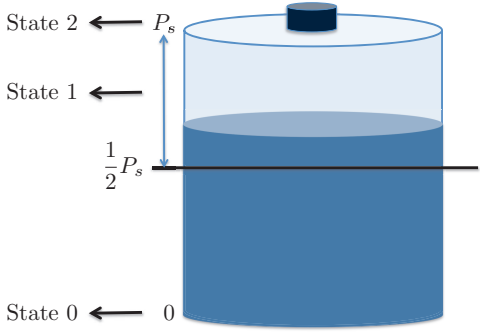}\vspace{5pt}\label{Fig:Battery}}
\subfigure[Markov chain model]{
\centering
\includegraphics[width=7cm]{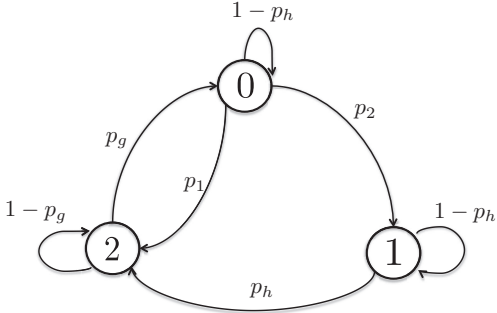}\vspace{5pt}\label{Fig:MarkovChain}} \vspace{-5pt}
\caption{The battery power state for the case of $M=2$ and the corresponding 3-state Markov chain model, where (a) shows an example of the ST being in state $1$ of the Markov model in (b), i.e., the current battery power level is in the range $[\frac{1}{2}P_s, P_s)$. }
\label{Fig:Case2}
\end{figure}

\begin{proposition} \label{Prop:TxProb:Case2}
If $\eta P_p r_h^{-\alpha} < P_s \leq 2\eta P_p  r_h^{-\alpha}$ or $M=2$ (double-slot charging), the transmission probability of a typical ST is given by
\begin{align} 
p_t & = \frac{p_h}{p_h+p_g\l(1+\frac{p_{2}}{p_h}\r)}p_g \label{Eq:TxProb:Case2_2}\\
 \label{Eq:TxProb:Case2} & = \frac{(1-e^{-\pi r_h^2 \lambda_p})e^{-\pi r_g^2 \lambda_p}}{1-e^{-\pi r_h^2 \lambda_p} + e^{-\pi r_g^2 \lambda_p}\l(1+\frac{e^{-\pi h_1^2 \lambda_p} - e^{-\pi r_h^2 \lambda_p}}{1-e^{-\pi r_h^2 \lambda_p}}\r)}.
\end{align}
\end{proposition}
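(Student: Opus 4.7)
The plan is to mirror the proof of Proposition~\ref{Prop:TxProb:Case1}: compute the stationary distribution of the three-state Markov chain whose transition matrix $\mathbf{P}_2$ is given in \eqref{Eq:Matrix:Case2}, identify the fully-charged probability $p_f$ with the stationary mass on state $2$, and then conclude via the factorization $p_t = p_f\, p_g$ established in \eqref{Eq:TxProb:Def}. Since the chain is clearly irreducible (there are positive-probability transitions $0 \to 1, 0 \to 2, 1 \to 2, 2 \to 0$ with $p_2, p_1, p_h, p_g > 0$) and aperiodic (nontrivial self-loops), a unique stationary distribution exists, so the steady-state argument is justified.

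Write $\boldsymbol{\pi}_2 = [\pi_{2,0}, \pi_{2,1}, \pi_{2,2}]$ for the left eigenvector of $\mathbf{P}_2$ corresponding to the unit eigenvalue. Reading $\boldsymbol{\pi}_2 \mathbf{P}_2 = \boldsymbol{\pi}_2$ column by column yields $\pi_{2,0}\,p_h = \pi_{2,2}\,p_g$ and $\pi_{2,0}\,p_2 = \pi_{2,1}\,p_h$ (the third equation is redundant, as one can check using $p_1 + p_2 = p_h$). Solving both in terms of $\pi_{2,2}$ gives $\pi_{2,0} = \pi_{2,2}\,p_g/p_h$ and $\pi_{2,1} = \pi_{2,2}\,p_g p_2 / p_h^2$. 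Imposing $\pi_{2,0} + \pi_{2,1} + \pi_{2,2} = 1$ then yields
\[
\pi_{2,2} \;=\; \frac{p_h^2}{p_h^2 + p_g\,p_h + p_g\,p_2} \;=\; \frac{p_h}{p_h + p_g\bigl(1 + \tfrac{p_2}{p_h}\bigr)},
\]
which is precisely $p_f$ here. Plugging into $p_t = p_f\,p_g$ gives \eqref{Eq:TxProb:Case2_2}, and substituting the closed forms $p_g = e^{-\pi r_g^2 \lambda_p}$, $p_h = 1 - e^{-\pi r_h^2 \lambda_p}$, and $p_2 = e^{-\pi h_1^2 \lambda_p} - e^{-\pi r_h^2 \lambda_p}$ from \eqref{Eq:GuardZoneRe:a}, \eqref{Eq:p_h}, and \eqref{Eq:p_2} produces \eqref{Eq:TxProb:Case2}.

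The linear algebra is routine; the more delicate point is justifying that $\mathbf{P}_2$ truly captures the battery dynamics. In particular, the transition from state $1$ to state $2$ uses probability $p_h$ rather than $p_2$, reflecting that once the battery level is already in $[\tfrac12 P_s, P_s)$ any slot in which the ST lies \emph{anywhere} in $\cH$ (be it $\cH_1$ or $\cH_2$) contributes at least $\tfrac12 P_s$ of harvested energy and therefore tops the battery off to $P_s$; this is what makes the state space finite. The derivation also relies on the modelling assumption from Section~\ref{Subsection:EHModel} that the harvesting zones of distinct PTs are disjoint, so that $\cH_1$ and $\cH_2$ partition $\cH$ and $p_2$ is given by the simple difference in \eqref{Eq:p_2}, and on the slot-to-slot independence of $\Phi_p$, which is needed both for the Markov property and for the factorization $p_t = p_f\,p_g$. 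Granted these (already made) assumptions, the proof is essentially the three-by-three linear system above.
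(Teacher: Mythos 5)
Your proof is correct and follows exactly the same route as the paper's: solve $\boldsymbol{\pi}_2\mathbf{P}_2=\boldsymbol{\pi}_2$ for the stationary distribution, identify $p_f=\pi_{2,2}$, and apply $p_t=p_f p_g$. You simply carry out the linear algebra (and the justification of the state-$1\to2$ transition probability $p_h$) explicitly where the paper leaves it as "similar to Proposition~\ref{Prop:TxProb:Case1}," and your computation of $\pi_{2,2}$ checks out.
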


\begin{proof}
The result in \eqref{Eq:TxProb:Case2_2} can be obtained by following the similar procedure as in the proof of Proposition~\ref{Prop:TxProb:Case1}, i.e., by solving $\boldsymbol{\pi}_2\mathbf{P}_2 = \boldsymbol{\pi}_2$, where $\boldsymbol{\pi}_2$ is the steady-state probability vector given by $\boldsymbol{\pi}_2=[\pi_{2,0}, \pi_{2,1}, \pi_{2,2}]$. Then, we obtain $p_f = \pi_{2,2}$ and then \eqref{Eq:TxProb:Case2_2} is obtained from \eqref{Eq:TxProb:Def}.    
\end{proof}
It is worth noting from \eqref{Eq:TxProb:Case2} that $p_t$ in this case is a decreasing function of $P_s$ since $h_1 = \l(\frac{P_s}{\eta P_p}\r)^{-\frac{1}{\alpha}}$ in \eqref{Eq:TxProb:Case2} is such a  function. In other words, if $P_s$ increases with fixed $P_p$ and $r_h$, then the size of $b(X,h_1) $ (single-slot charging region) becomes smaller, which results in an ST harvesting for two slots to be fully charged more frequently, and thus a smaller $p_f$. Hence, $p_t$ becomes smaller as well given $p_t = p_f p_g$ in \eqref{Eq:TxProb:Def}.

\subsection{Multi-Slot Charging ($M>2$)} \label{Subsection:bounds}
For multi-slot charging with $P_s > 2\eta P_p  r_h^{-\alpha}$ or $M>2$, the minimum charging power at the edge of the harvesting zone, $\eta P_p r_h^{-\alpha}$,  is smaller than $\frac{1}{2}P_s$. Unlike the previous two cases of $M=1$ and $M=2$, the battery power level in this case cannot be characterized exactly by a finite-state Markov chain since it is not possible in general to uniquely determine the state-transition probabilities.\footnote{For instance, if $M=3$, following the previous two cases, we may divide the battery power level into $4$ levels as $0$, $[\frac{1}{3}P_s,\frac{2}{3}P_s)$, $[\frac{2}{3}P_s,P_s)$, and $P_s$ and match each level to the states $0,1,2$, and $3$, respectively. Then it can be easily shown that the transition probabilities are unknown for some of the state transitions, e.g., from state $1$ to $2$.} 
However, we have shown that for the case of $M=2$, the battery power level can indeed be characterized with a 3-state Markov chain regardless of the fact that we do not know the exact value of the battery power level in state $1$, but rather only know its range $[\frac{1}{2}P_s,P_s)$, provided that the minimum charging power $\eta P_p r_h^{-\alpha}$ is no smaller than $\frac{1}{2}P_s$. Based on this result, we obtain both the upper  and lower bounds on $p_t$ for the case with $M>2$ as follows. 

\begin{figure}
\centering
\includegraphics[width=7cm]{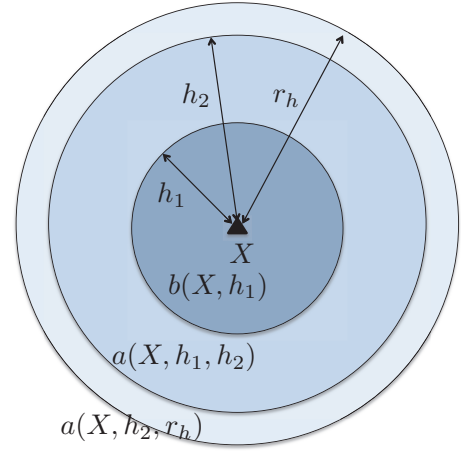}\vspace{5pt} 
\caption{Divided harvesting zone for the case of $M>2$. In this case, the amount of power harvested from PT $X$ in $a(X,h_2,r_h)$ is either overestimated as $\frac{1}{2}P_s$ or underestimated as $0$ to obtain an upper/lower bound on $p_t$ in Section~\ref{Subsection:bounds}.}
 \label{Fig:Division2}
\end{figure}

As shown in Fig.~\ref{Fig:Division2}, we divide the harvesting zone into $3$ disjoint regions $b(X, h_1)$, $a(X,h_1,h_2)$, and $a(X,h_2,r_h)$, where $0<h_1<h_2<r_h$ with $h_1$ given in the case of $M=2$  and $h_2 = \l(\frac{P_s}{2\eta P_p}\r)^{-\frac{1}{\alpha}}$. Note that $b(X, h_1)$ is also defined in the case of $M=2$, while the region $a(X, h_1, h_2)$ consists of the locations in $b(X,r_h)$ at which the power harvested from PT $X$ is larger than or equal to $\frac{1}{2}P_s$, but smaller than $P_s$, and the region $a(X, h_2, r_h)$ consists of the remaining locations in $b(X,r_h)$ at which the harvested power is smaller than $\frac{1}{2}P_s$. Then, if we assume that the power harvested from a PT in the region $a(X, h_2, r_h)$ is equal to $\frac{1}{2}P_s$ (an overestimation), we can obtain an upper bound on $p_t$; however, if we assume it is equal to $0$ (an underestimation), we can then obtain a lower bound on $p_t$, by applying a similar analysis over the 3-state Markov chain as for the case of $M=2$. For convenience, we define the following mutually exclusive sets $\cA_1 = \bigcup_{X\in\Phi_p}b(X, h_1)$, $\cA_2 = \bigcup_{X\in\Phi_p}a(X, h_1, h_2)$, and $\cA_3 = \bigcup_{X\in\Phi_p}a(X, h_2, r_h)$, where $\cA_1 = \cH_1$ and $\cA_1 \cup \cA_2 \cup \cA_3 = \cH$. Let $p'_2 = \Pr\{Y^\star \in \cA_2\}$ and $p_3 = \Pr\{Y^\star \in \cA_3\}$. It then follows that $p_1 + p'_2 + p_3 = p_h$, where $p_1$ is given in \eqref{Eq:p_1} and 
\begin{align}
p'_2 &= \Pr\{Y^\star \in \cA_1 \cup \cA_2\} - \Pr\{Y^\star \in \cA_1\} \nn\\
&= e^{-\pi \lambda_p h_1^2} - e^{-\pi \lambda_p h_2^2}, \label{Eq:p'_2} \\
p_3 &= p_h - p_1 - p'_2 =  e^{-\pi \lambda_p h_2^2} - e^{-\pi \lambda_p r_h^2}. \label{Eq:p_3}
\end{align}
The following proposition is then obtained.

\begin{proposition} \label{Theorem:Bound:TxProb}
If $P_s > 2\eta P_p  r_h^{-\alpha}$ or $M>2$, the transmission probability of an ST is bounded as
\begin{equation} \label{Bound:TxProb}
\frac{p_1+p'_2}{(p_1+p'_2) + p_g\l(1+\frac{p'_2}{p_1+p'_2}\r)}p_g < p_t < \frac{p_h}{p_h + p_g\l(1+\frac{p'_2 + p_3}{p_h}\r)}p_g.
\end{equation}
\end{proposition}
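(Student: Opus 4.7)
The plan is to sandwich the true dynamics between two Markov-chain-tractable surrogate models, each of which fits the 3-state template from Section~\ref{Subsection:Case2}, and then to reuse Proposition~\ref{Prop:TxProb:Case2}. The key observation is that the only obstruction to reusing the $M=2$ analysis is that inside $\cA_3 = \bigcup_{X\in\Phi_p}a(X,h_2,r_h)$ the harvested power takes a continuum of values strictly less than $\frac{1}{2}P_s$, which prevents a clean state partition. I would therefore replace this random value by a constant on each side, yielding two tractable surrogates.

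For the upper bound I would \emph{overestimate} the harvest in $\cA_3$ as $\frac{1}{2}P_s$. Under this substitution every visit anywhere in $\cH$ delivers at least $\frac{1}{2}P_s$, so the battery at slot boundaries can only equal $0$, a value in $[\frac{1}{2}P_s, P_s)$, or $P_s$, exactly as in the $M=2$ setting. The transition matrix inherits the structural form of $\mathbf{P}_2$ in \eqref{Eq:Matrix:Case2}: the ``single-slot full'' region is still $\cA_1 = \cH_1$ with probability $p_1$, whereas the ``needs another slot'' probability becomes $p'_2 + p_3$ (the combined hitting probability of $\cA_2 \cup \cA_3$). Feeding these into the closed-form expression of Proposition~\ref{Prop:TxProb:Case2} with $p_2$ replaced by $p'_2 + p_3$ recovers the right-hand side of \eqref{Bound:TxProb}.

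For the lower bound I would \emph{underestimate} by setting the harvest in $\cA_3$ to $0$. The effective harvesting region then shrinks to $\cA_1 \cup \cA_2 = \bigcup_{X\in\Phi_p}b(X,h_2)$, inside which the minimum power $\eta P_p h_2^{-\alpha}$ equals exactly $\frac{1}{2}P_s$. The same 3-state chain applies with effective $p_h = p_1 + p'_2$, single-slot probability $p_1$, and two-slot probability $p'_2$. Plugging these into Proposition~\ref{Prop:TxProb:Case2} produces the left-hand side of \eqref{Bound:TxProb}.

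What remains is to justify that these are genuine bounds on $p_t$ and not merely related quantities. I would do this via a coupling argument: realize the PT point process, the ST trajectory, and the guard-zone indicators on a common probability space, and drive the true chain together with its over- and under-estimated counterparts by the same realization. A one-slot induction then yields $B^{\ell}(t) \leq B(t) \leq B^{u}(t)$ at every slot boundary, where $B,B^{u},B^{\ell}$ denote the respective battery levels. Since a transmission occurs precisely when the battery reaches $P_s$ and the ST is outside all guard zones, and since $p_g$ depends only on the independent next-slot PT realization as observed in \eqref{Eq:TxProb:Def}, this pathwise ordering passes through to $p_f^{\ell} \leq p_f \leq p_f^{u}$ and hence to the corresponding sandwich on $p_t = p_f p_g$. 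The inequalities are strict because the true harvest in $\cA_3$ lies strictly inside $(0,\frac{1}{2}P_s)$ with positive probability, so neither surrogate matches the truth in distribution. I expect the most delicate step to be bookkeeping the coupling in slots where $B(t)$ is somewhere in $[\frac{1}{2}P_s,P_s)$ while $B^{u}(t)$ has already reached $P_s$: the three chains then take different actions (transmit versus continue charging), and the coupling has to track the resulting divergence while preserving the pathwise inequalities.
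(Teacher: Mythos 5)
Your core construction is exactly the paper's proof (Appendix~\ref{Proof:Theorem:Bound:TxProb}): replace the harvest in $a(X,h_2,r_h)$ by $\frac{1}{2}P_s$ or by $0$, obtain two $3$-state chains whose transition matrices are $\mathbf{P}_2$ with $p_2$ replaced by $p'_2+p_3$ (with single-slot probability $p_1$) and by $p'_2$ (with effective harvesting probability $p_1+p'_2$) respectively, solve for the steady state, and multiply by $p_g$; the paper does precisely this and nothing more, simply asserting that over/underestimation yields bounds. Your attempt to go further with a pathwise coupling is laudable but, as you yourself half-concede, it does not work as stated: once the overestimated chain reaches $P_s$, transmits, and resets to $0$ while the true chain is still charging, the ordering $B(t)\leq B^{u}(t)$ is violated at the next slot boundary, so the one-slot induction fails and the claimed sandwich on $p_f$ does not follow from battery-level ordering. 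If you want a rigorous justification, a cleaner route is a renewal--reward argument on the regenerative cycles between successive visits to state $0$: the time spent fully charged per cycle is geometric with mean $1/p_g$ independently of the charging model, so $p_f = (1/p_g)/(\E[T_{\mathrm{ch}}]+1/p_g)$ where $T_{\mathrm{ch}}$ is the empty-to-full charging time, and the over/underestimated harvests stochastically decrease/increase $T_{\mathrm{ch}}$ (here a coupling of the harvest amounts alone, not of the battery trajectories, does suffice), which yields the monotonicity of $p_f$ and hence of $p_t=p_f p_g$.
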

\begin{proof}
See Appendix~\ref{Proof:Theorem:Bound:TxProb}.
\end{proof}

It is worth mentioning that the upper bound on $p_t$ is a decreasing function of $P_s$ since $h_1 = \l(\frac{P_s}{\eta P_p}\r)^{-\frac{1}{\alpha}}$. Also note that the bounds in \eqref{Bound:TxProb} are tight in the case of $M=1$ or $M=2$, since $p'_2=p_3=0$ with $M=1$, and $p'_2 = p_2$ and $p_3=0$ with $M=2$, thus leading to the  same results in \eqref{Eq:TxProb:Case1_2} and \eqref{Eq:TxProb:Case2_2}, respectively.

Note that unlike the case of $M=2$, it is not possible to verify in general whether $p_t$ for the case of $M>2$ is a decreasing function of $P_s$ or not; however, it is conjectured to be so since a larger value of $P_s$ will generally render an ST spend more time to be fully charged. We verify this by simulation in the following subsection (see Fig.~\ref{Fig:TxProb}). 

\subsection{Numerical Example} \label{Subsec:Numerical}
To verify the results on $p_t$, we provide numerical examples as shown in Figs.~\ref{Fig:TxProb}, \ref{Fig:TxProbVSPTDen}, and \ref{Fig:TxProbVSr_g}. For all of these examples, we set the path-loss exponent as $\alpha=4$ and the harvesting efficiency as $\eta = 0.1$.

\begin{figure}
\centering
\includegraphics[width=9cm]{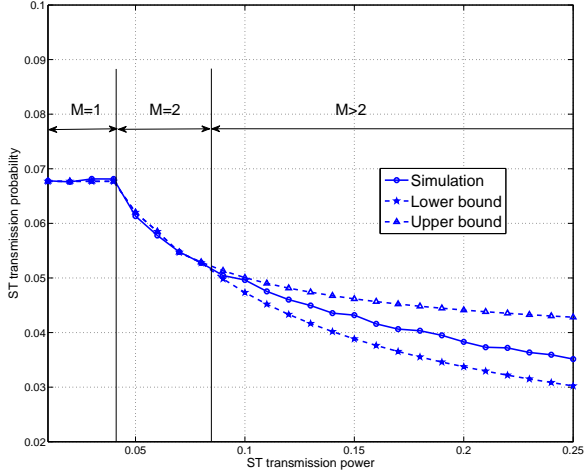}\vspace{5pt} 
\caption{ST transmission probability $p_t$ versus ST transmission power $P_s$, with $\lambda_p = 0.01$, $r_g = 4$, $r_h=1.5$, and $P_p = 2$.}
\label{Fig:TxProb}
\end{figure}

\begin{figure}
\centering
\includegraphics[width=9cm]{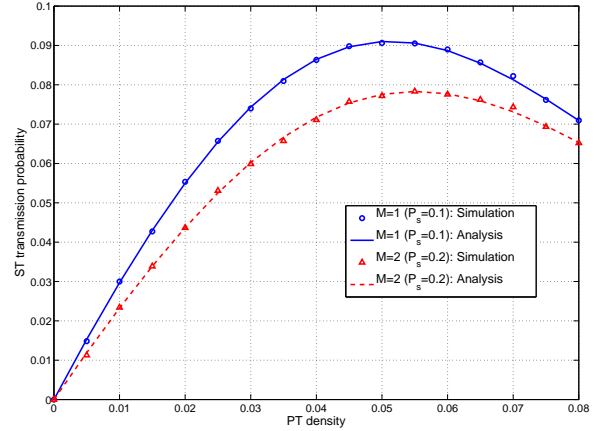}\vspace{5pt} 
\caption{ST transmission probability $p_t$ versus PT density $\lambda_p$, with $r_g=3$, $r_h=1$ and $P_p=1$.}
 \label{Fig:TxProbVSPTDen}
\end{figure}

\begin{figure}
\centering
\includegraphics[width=9cm]{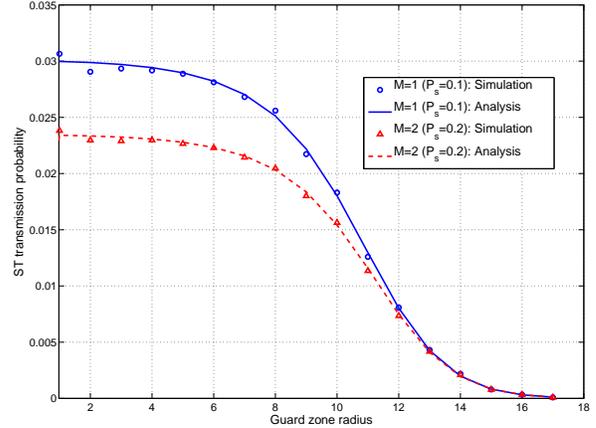}\vspace{5pt} 
\caption{ST transmission probability $p_t$ versus the radius of guard zone $r_g$, with $\lambda_p=0.01$, $r_h=1$,  and $P_p=1$.} 
\label{Fig:TxProbVSr_g}
\end{figure}

In Fig.~\ref{Fig:TxProb}, we show ST transmission probability $p_t$ versus ST transmission power $P_s$.  It is worth noting that $M=1$ if $0< P_s \leq \eta P_p r_h^{-\alpha}$, $M=2$ if $\eta P_p r_h^{-\alpha}< P_s \leq 2\eta P_p r_h^{-\alpha}$, and $M>2$ if $P_s > 2\eta P_p r_h^{-\alpha}$.  It is observed that $p_t$ is constant if $M=1$, but is a decreasing function of $P_s$ if $M=2$, which agrees with the results in \eqref{Eq:TxProb:Case1} and \eqref{Eq:TxProb:Case2}, respectively. It is also shown that if $M>2$, $p_t$ is still a decreasing function of $P_s$ as we conjectured. Moreover, the upper bound and lower bound on $p_t$ obtained in \eqref{Bound:TxProb} for $M>2$ are depicted in this figure. 
These bounds are observed to be tight when $M=1$ and $M=2$, while they get looser with increasing $P_s$ when $M>2$. The reason is that the size of the region $a(X,h_2,r_h)$ shown in Fig.~\ref{Fig:Division2}, in which we overestimate or underestimate the harvested power, enlarges with increasing $P_s$. However, since only small value of $P_s$ is of our interest, we can assume that these bounds are reasonably accurate for small values of $M$.

Fig.~\ref{Fig:TxProbVSPTDen} shows $p_t$ versus PT density $\lambda_p$. It is observed that for both $M=1$ and $M=2$, $p_t$ first increases with $\lambda_p$ when $\lambda_p$ is small but starts to  decrease with $\lambda_p$ when $\lambda_p$ becomes sufficiently large. This can be explained as follows.  If $\lambda_p$ is small, increasing $\lambda_p$ is more beneficial since each ST will get charged more frequently and thus be able to transmit (i.e., $p_f$ increases more substantially than the decrease of $p_g$). However, after $\lambda_p$ exceeds a certain threshold, increasing $\lambda_p$ will more pronounce the effect of guard zones and thus make STs transmit less frequently (i.e., $p_g$ decreases more substantially than the increase of $p_f$).

In Fig.~\ref{Fig:TxProbVSr_g}, we show $p_t$ versus the guard zone radius $r_g$. It is observed that $p_t$ is a decreasing function of $r_g$. Intuitively, this result is expected since larger $r_g$ results in STs transmitting less frequently, i.e., smaller values of $p_g$, and it is known from \eqref{Eq:TxProb:Def} that $p_t=p_f p_g$.

\section{Outage Probability} \label{Sec:OutageProb}
In this section, the outage probabilities of both the primary and secondary networks are studied. Let $\Phi_t$ denote the point process of the active (transmitting) STs. In addition, let $I_p$ and $I_s$ indicate the aggregate interference at the origin from all PTs and active STs, respectively, which are modeled by \emph{shot-noise processes} \cite{PoissonProcesses}, given by $I_{p}=\sum_{X\in\Phi_p}g_X P_p|X|^{-\alpha}$ and $I_{s}=\sum_{Y\in\Phi_t}g_Y P_s|Y|^{-\alpha}$, respectively. Note that in general, due to the presence of the guard zone and/or harvesting zone, in each time slot, the point process  $\Phi_t$ is not necessarily an HPPP; thus,  $I_s$ is not the shot-noise process of an HPPP. Accordingly, the outage probabilities $\Pout^{(p)}$ and $\Pout^{(s)}$ for primary and secondary networks, both related to $I_s$, are difficult to be characterized exactly. To overcome this difficulty, we make the following assumption on the process of active STs.

\begin{assumption} \label{Assumption:Approximation}
The point process of active STs $\Phi_t$ is an HPPP with density $p_t\lambda_s$.
\end{assumption} 
 It is shown in Fig.~\ref{Fig:CDF} that the cumulative distribution function (CDF) of $I_s$, given by $\Pr\{I_s \leq x\}$, obtained by simulations, can be well approximated by that of approximated $I_s$ based on Assumption~\ref{Assumption:Approximation}. Further verifications of Assumption~\ref{Assumption:Approximation} will be given later by simulations (see Figs.~\ref{Fig:OutageProb} and \ref{Fig:OutageProb_vs_P_s}).

\begin{figure}
\centering
\includegraphics[width=9cm]{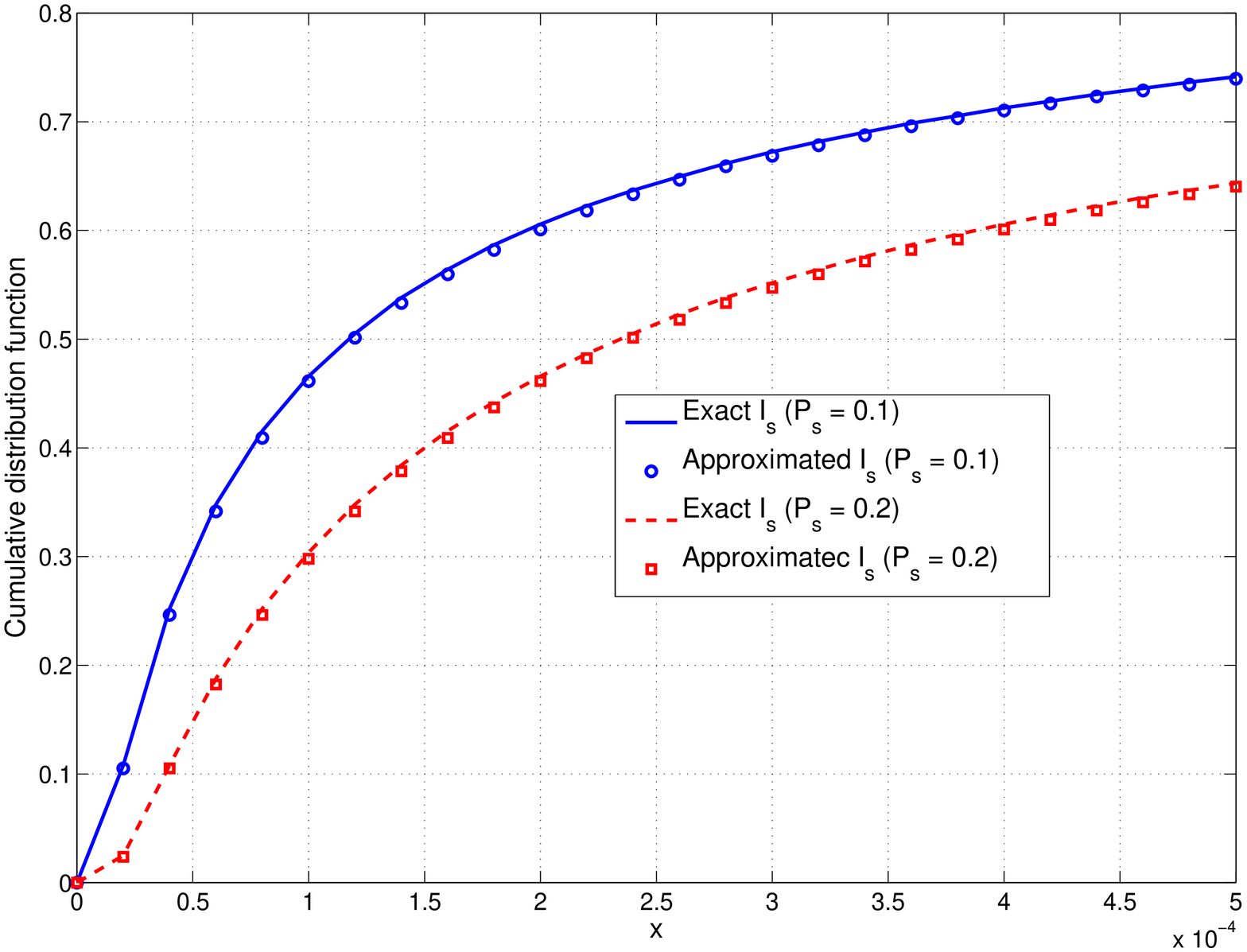}\vspace{5pt} 
\caption{The CDF of exact $I_s$ and approximated $I_s$ (based on Assumption~\ref{Assumption:Approximation}) with $\alpha=4$, $\eta = 0.1$, $r_g = 3$, $r_h=1$, $\lambda_s = 0.2$, $\lambda_p = 0.01$, and $P_p = 2$.} 
\label{Fig:CDF}
\end{figure}

Let $\Lambda(\lambda)$ denote the HPPP with density $\lambda >0$. Under Assumption~\ref{Assumption:Approximation}, the distribution of $\Phi_t$ is the same as that of $\Lambda(p_t\lambda_s)$. It thus follows that $I_s$ can be rewritten as
\begin{equation} \label{Eq:Is_Approximation}
I_s = \sum_{Y\in\Lambda(p_t\lambda_s)}g_Y P_s|Y|^{-\alpha}.
\end{equation}

Consider first the outage probability of the primary network, $\Pout^{(p)}$, which can be characterized by considering a typical PR located at the origin. Slivnyak's theorem \cite{PoissonProcesses} states that an additional PT corresponding to the PR at the origin does not affect the distribution of $\Phi_{p}$. Therefore, the outage probability of the PR at the origin is expressed as 
\begin{equation} \label{Eq:PoutPrimary}
\Pout^{(p)}  = \Pr \l\{\frac{g_p P_p d_p^{-\alpha}}{I_p + I_s + \sigma^2} < \theta_p  \r\},
\end{equation} 
where $g_p$ is the channel power between the  PR at the origin and its corresponding PT, and $\sigma^2 $ is the AWGN  power. Then, $\Pout^{(p)}$ is obtained in the following lemma. 

\begin{lemma} \label{Lemma:Approx:PoutPrimary2}
Under Assumption~\ref{Assumption:Approximation}, the outage probability of a typical PR at the origin is given by
\begin{equation} \label{Approx:PoutPrimary2}
\Pout^{(p)} = 1-\exp \l(-\tau_p \r), 
\end{equation}
where
\begin{equation} \label{Eq:Tau_p}
\tau_p = \l(\lambda_p + p_t\lambda_s\l(\frac{P_s}{P_p}\r)^{\frac{2}{\alpha}}\r)\theta_p^{\frac{2}{\alpha}}d_p^2\varphi + \frac{\theta_p d_p^\alpha \sigma^2}{P_p},
\end{equation}
$\varphi = \pi\frac{2}{\alpha}\Gamma(\frac{2}{\alpha})\Gamma(1-\frac{2}{\alpha})$,  with $\Gamma(x) = \int_0^\infty y^{x-1}e^{-y}dy$ denoting the Gamma function.
\end{lemma}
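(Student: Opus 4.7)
The plan is to exploit the Rayleigh-fading assumption on $g_p$ (exponential with unit mean) to convert the outage probability into a Laplace-transform expression in the aggregate interference plus noise, and then compute the two Laplace transforms separately using the standard shot-noise theory for an HPPP on $\mR^2$ (with Assumption~\ref{Assumption:Approximation} making $\Phi_t$ a bona fide HPPP for $I_s$).

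First I would rewrite \eref{Eq:PoutPrimary} by isolating $g_p$:
\begin{equation*}
\Pout^{(p)} = \Pr\!\left\{ g_p < \frac{\theta_p d_p^{\alpha}}{P_p}\bigl(I_p + I_s + \sigma^2\bigr) \right\}.
\end{equation*}
Since $g_p$ is unit-mean exponential and independent of $I_p$ and $I_s$, conditioning on the interferers and integrating out $g_p$ yields
\begin{equation*}
\Pout^{(p)} = 1 - \exp\!\left(-\frac{\theta_p d_p^{\alpha}\sigma^2}{P_p}\right)\mathcal{L}_{I_p}(s)\,\mathcal{L}_{I_s}(s),\qquad s = \frac{\theta_p d_p^{\alpha}}{P_p},
\end{equation*}
where I further use the independence of $I_p$ and $I_s$ (the point processes $\Phi_p$ and $\Phi_t$ are independent).

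Next I would compute $\mathcal{L}_{I_p}(s)$ and $\mathcal{L}_{I_s}(s)$ from the same master formula. For a generic HPPP $\Lambda(\lambda)$ on $\mR^2$ and i.i.d.\ unit-mean exponential marks $g$, the PGFL combined with averaging over $g$ gives
\begin{equation*}
\mE\!\left[\exp\!\left(-s\sum_{X\in\Lambda(\lambda)} g_X P |X|^{-\alpha}\right)\right] = \exp\!\left(-2\pi\lambda\int_0^{\infty}\frac{sP r^{1-\alpha}}{1+sPr^{-\alpha}}\,dr\right).
\end{equation*}
The substitution $u = r^{\alpha}/(sP)$ turns the inner integral into $\frac{(sP)^{2/\alpha}}{\alpha}\int_0^\infty \frac{u^{2/\alpha-1}}{1+u}du = \frac{(sP)^{2/\alpha}}{\alpha}\Gamma(2/\alpha)\Gamma(1-2/\alpha)$ via the Beta-function identity, so the Laplace transform collapses to $\exp\!\bigl(-\lambda(sP)^{2/\alpha}\varphi\bigr)$ with $\varphi = \pi\frac{2}{\alpha}\Gamma(2/\alpha)\Gamma(1-2/\alpha)$ exactly as defined in the lemma.

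Finally I would specialize: for $I_p$ with density $\lambda_p$ and power $P_p$, $(sP_p)^{2/\alpha} = \theta_p^{2/\alpha}d_p^2$; for $I_s$ under Assumption~\ref{Assumption:Approximation}, $\Phi_t \overset{d}{=} \Lambda(p_t\lambda_s)$ with power $P_s$, giving $(sP_s)^{2/\alpha} = \theta_p^{2/\alpha}d_p^2 (P_s/P_p)^{2/\alpha}$. Collecting the three exponentials into one and reading off the coefficient of $-1$ in the exponent recovers $\tau_p$ in \eref{Eq:Tau_p}. The only real subtlety is the Beta-function integral step (and verifying convergence, which needs $\alpha>2$ as assumed); everything else is bookkeeping. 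I don't foresee a hard part beyond being careful with the substitution and the Gamma-function identity.
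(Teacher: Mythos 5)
Your proposal is correct and follows essentially the same route as the paper: isolate the exponential $g_p$ to turn the outage probability into a product of Laplace transforms of $I_p$ and $I_s$ (the latter treated as shot noise of the HPPP $\Lambda(p_t\lambda_s)$ via Assumption~\ref{Assumption:Approximation}), then evaluate both with the standard formula $\exp(-\lambda(sP)^{2/\alpha}\varphi)$. The only difference is that you derive that formula from the PGFL and the Beta-function integral, whereas the paper simply cites it from \cite{Haenggi:InterferenceLargeWirelessNetworks:2008}.
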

\begin{proof} 
See Appendix~\ref{Proof:Lemma:Approx:PoutPrimary2}.
\end{proof}

\begin{figure} 
\centering
\includegraphics[width=9cm]{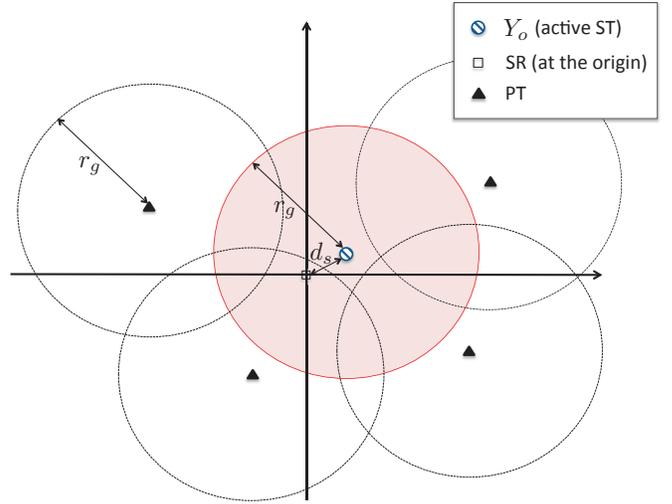} \vspace{5pt}
\caption{A typical SR located at the origin, for which there is no PT inside the shaded region $b(Y_o,r_g)$.}
\label{Fig:OutageSecondary}
\end{figure} 

Next, consider the outage probability of the secondary network, $\Pout^{(s)}$, which can be characterized by a typical SR located at the origin. Note that there must be an active ST, denoted by $Y_o$, corresponding to the SR at the origin. Since an ST cannot transmit if it is inside any guard zone, to accurately approximate $\Pout^{(s)}$ under Assumption~\ref{Assumption:Approximation}, we consider the outage probability conditioned on that $Y_o$ is outside all the guard zones and thus there is no PT inside the disk of radius $r_g$ centered at $Y_o$ (see Fig.~\ref{Fig:OutageSecondary}). Let the event in the above condition be denoted by $\cE = \{\Phi_p \cap b(Y_o,r_g) = \emptyset\}$. Then the outage probability of a typical SR at the origin can be obtained as
\begin{equation} \label{Eq:PoutSecondary}
\Pout^{(s)} = \Pr\l\{\frac{g_s P_s d_s^{-\alpha}}{I_p+I_{s} + \sigma^2} <\theta_s \l| \cE \r.\r\},
\end{equation}
where $g_s$ is the channel power between the SR at the origin and the corresponding ST $Y_o$. From the law of total probability we have
\begin{equation} \label{Eq:PoutSecondary2}
\Pout^{(s)} = \frac{\Pr\l\{\frac{g_s P_s d_s^{-\alpha}}{I_p+I_{s} + \sigma^2}<\theta_s\r\} - \Pr\l\{\frac{g_s P_s d_s^{-\alpha}}{I_p+I_{s} + \sigma^2} <\theta_s \l| \bar{\cE} \r.\r\}\Pr\{\bar{\cE}\}}{\Pr\{\cE\}}.
\end{equation}
Note that $\bar{\cE} = \{\Phi_p \cap b(Y_o,r_g) \neq \emptyset\}$. Then we have the following lemma.

\begin{lemma} \label{Lemma:Approx:PoutSecondary}
Under Assumption~\ref{Assumption:Approximation}, the outage probability of the typical SR at the origin is approximated by
\begin{equation} \label{Approx:PoutSecondary}
\Pout^{(s)} \approx \frac{1-\exp\l(-\tau_s\r)-(1-p_g)}{p_g},
\end{equation}
where
\begin{equation} \label{Eq:Tau_s}
\tau_s = \l(\lambda_p \l(\frac{P_s}{P_p}\r)^{-\frac{2}{\alpha}}+ p_t\lambda_s\r)\theta_s^{\frac{2}{\alpha}}d_s^2\varphi + \frac{\theta_s d_s^\alpha\sigma^2}{P_s}.
\end{equation}
\end{lemma}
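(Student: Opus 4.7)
\proof
The plan is to start from the decomposition already set up in \eqref{Eq:PoutSecondary2}, compute each of the three ingredients on the right-hand side (the unconditional outage probability, $\Pr\{\cE\}$, and the conditional outage probability given $\bar{\cE}$ weighted by $\Pr\{\bar{\cE}\}$), and finally assemble them.

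\emph{Step 1: the unconditional outage probability.} First I would mimic the computation used in the proof of Lemma~\ref{Lemma:Approx:PoutPrimary2}, only with the roles of the desired signal and of the two interference processes adjusted to the secondary link. Conditioning on $\Phi_p$ and $\Phi_t$ and using that $g_s$ is exponential with unit mean,
\begin{equation}
\Pr\!\l\{\tfrac{g_s P_s d_s^{-\alpha}}{I_p+I_s+\sigma^2}<\theta_s\r\}=1-\mathbb{E}\!\l[\exp\!\l(-\tfrac{\theta_s d_s^\alpha}{P_s}(I_p+I_s+\sigma^2)\r)\r]. \nonumber
\end{equation}
Because $\Phi_p$ and $\Phi_t$ are assumed to be independent HPPPs (the latter by Assumption~\ref{Assumption:Approximation}), the expectation factors into the product of the Laplace transforms of two independent HPPP shot-noise processes, evaluated at $s=\theta_s d_s^\alpha/P_s$. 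The standard HPPP Laplace-transform identity then yields a single exponential whose exponent collects one contribution from the PT interference with intensity $\lambda_p$ and power $P_p$ and one from the ST interference with intensity $p_t\lambda_s$ and power $P_s$, giving exactly $1-\exp(-\tau_s)$ with $\tau_s$ as in \eqref{Eq:Tau_s}. The factor $(P_s/P_p)^{-2/\alpha}$ in front of $\lambda_p$ arises from the usual change-of-variable $r\mapsto r(P_p/P_s)^{1/\alpha}$ inside the Laplace integral for the PT shot noise.

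\emph{Step 2: void probability.} The event $\cE=\{\Phi_p\cap b(Y_o,r_g)=\emptyset\}$ has probability $\Pr\{\cE\}=p_g=e^{-\pi r_g^2\lambda_p}$ by the void-probability formula of an HPPP, and consequently $\Pr\{\bar{\cE}\}=1-p_g$. These are the denominator and the weight of the subtracted term in \eqref{Eq:PoutSecondary2}.

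\emph{Step 3: the conditional outage on $\bar{\cE}$.} This is the step I expect to be the main obstacle, since here one has to justify the approximation
\begin{equation}
\Pr\!\l\{\tfrac{g_s P_s d_s^{-\alpha}}{I_p+I_s+\sigma^2}<\theta_s \,\Big|\, \bar{\cE}\r\}\approx 1. \nonumber
\end{equation}
The idea is that on $\bar{\cE}$ at least one PT lies within $r_g$ of $Y_o$, so the PT shot noise $I_p$ includes a term whose mean received power is at least $P_p r_g^{-\alpha}$; since $P_p\gg P_s$ and $d_s<r_g$ in the regime of interest, this dominant interferer makes the SINR fall below $\theta_s$ with probability close to one, so the conditional outage probability is replaced by $1$. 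Under this replacement, the subtracted term in \eqref{Eq:PoutSecondary2} becomes simply $\Pr\{\bar{\cE}\}=1-p_g$.

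\emph{Step 4: assembly.} Substituting the results of Steps 1--3 into \eqref{Eq:PoutSecondary2} yields
\begin{equation}
\Pout^{(s)}\approx\frac{\bigl(1-\exp(-\tau_s)\bigr)-(1-p_g)}{p_g}, \nonumber
\end{equation}
which is exactly \eqref{Approx:PoutSecondary}. The only step that is an approximation (as opposed to an equality under Assumption~\ref{Assumption:Approximation}) is Step 3, which explains why the lemma is stated with $\approx$ rather than $=$; the remaining steps are exact consequences of Slivnyak's theorem and the HPPP Laplace-transform formula.
\endproof
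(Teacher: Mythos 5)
Your proposal is correct and follows essentially the same route as the paper's own proof: decompose via \eqref{Eq:PoutSecondary2}, evaluate the unconditional term with the HPPP Laplace-transform computation from Lemma~\ref{Lemma:Approx:PoutPrimary2} to get $1-\exp(-\tau_s)$, and invoke $P_p \gg P_s$ to set the conditional outage probability given $\bar{\cE}$ to $1$, so that the subtracted term reduces to $1-p_g$. Your Step~3 justification (a dominant PT interferer within $r_g$ forces outage) is exactly the argument the paper gives in Appendix~\ref{Proof:Lemma:Approx:PoutSecondary}.
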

\begin{proof} 
See Appendix~\ref{Proof:Lemma:Approx:PoutSecondary}.
\end{proof}

\begin{figure}
\centering
\includegraphics[width=9.5cm]{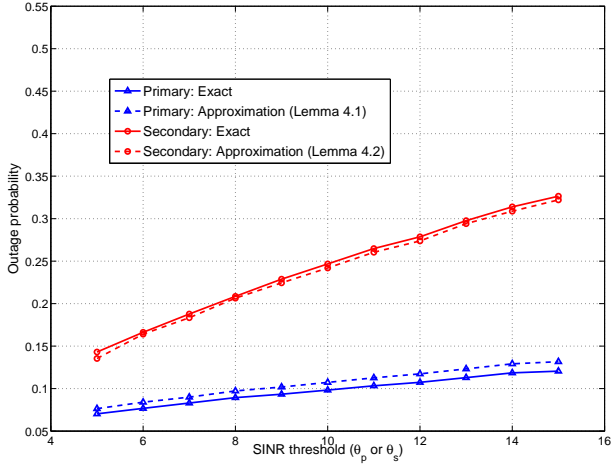}\vspace{5pt} 
\caption{Outage probability of primary and secondary network versus SINR threshold, with $\alpha=4$, $\eta = 0.1$, $d_p=d_s=0.5$, $r_g = 3$, $r_h=1$, $\lambda_p = 0.01$, $\lambda_s = 0.1$, $P_p = 1$, and $P_s = 0.1$.} 
\label{Fig:OutageProb}
\end{figure}

\begin{figure}
\centering
\includegraphics[width=9.5cm]{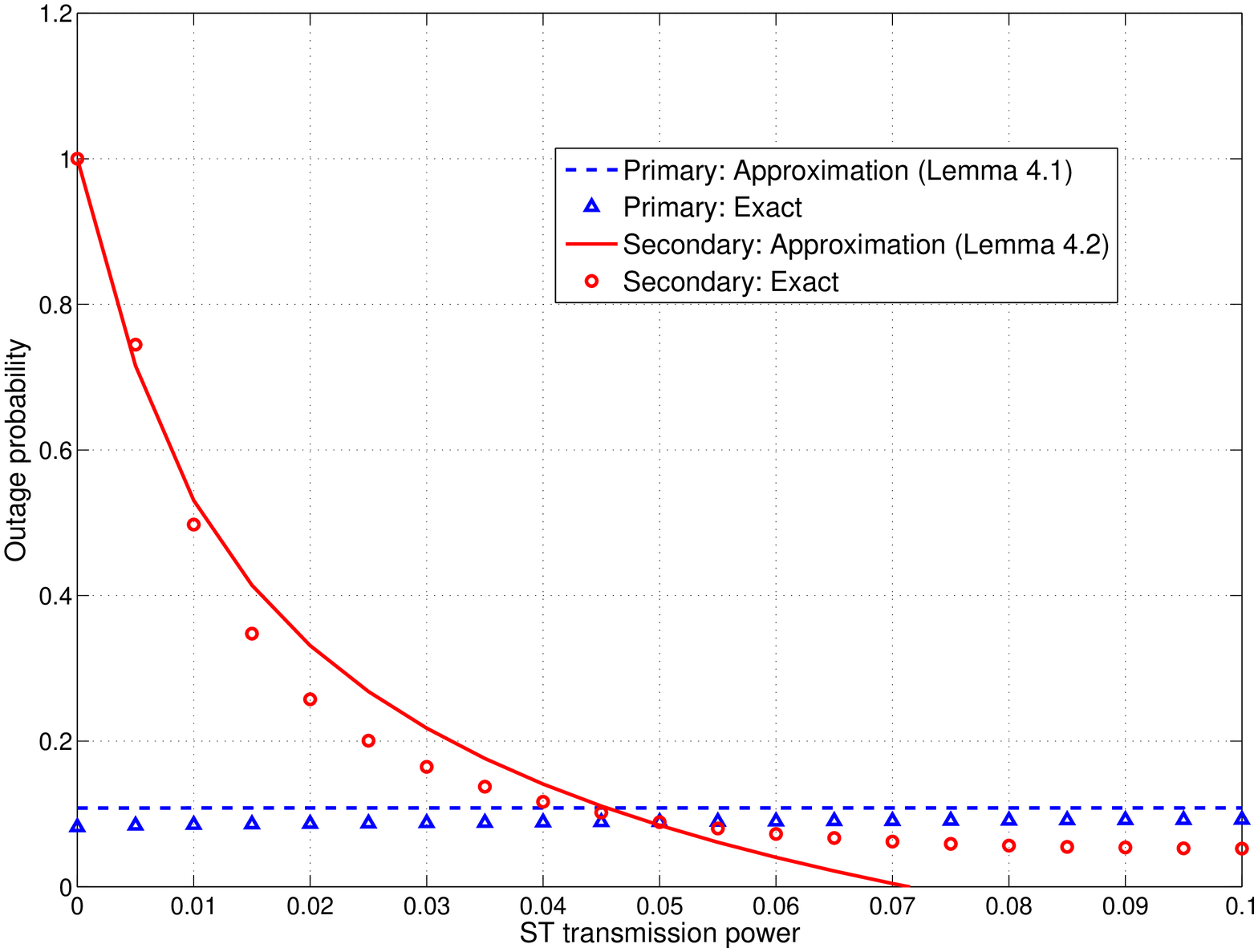}\vspace{5pt} 
\caption{Outage probability of primary and secondary network versus ST transmission power $P_s$, with $\alpha=4$, $\eta = 0.1$, $d_p=d_s=0.5$, $r_g = 4$, $r_h=1$, $\lambda_s = 0.2$, $\lambda_p = 0.01$, $\theta_p=\theta_s=5$, and $P_p = 2$.} 
\label{Fig:OutageProb_vs_P_s}
\end{figure}

Although $I_s$ can be well approximated by \eqref{Eq:Is_Approximation} based on Assumption~\ref{Assumption:Approximation}, it is worth mentioning that the approximated result of $\Pout^{(p)}$ and $\Pout^{(s)}$ in Lemmas~\ref{Lemma:Approx:PoutPrimary2}  and \ref{Lemma:Approx:PoutSecondary}, respectively, are valid only when $P_p \gg P_s$, as assumed in this paper for the following reasons. First, to derive $\Pout^{(p)}$ under Assumption~\ref{Assumption:Approximation}, STs are uniformly located and thus can be inside the guard zone corresponding to the typical PR at the origin, and as a result cause interference to the PR.  However, if we assume $P_p \gg P_s$, the interference due to STs inside this guard zone is negligible and thus can be ignored. Next, to derive $\Pout^{(s)}$, as shown in Appendix~\ref{Proof:Lemma:Approx:PoutSecondary}, the term $\Pr\l\{\frac{g_s P_s d_s^{-\alpha}}{I_p+I_{s} + \sigma^2} <\theta_s \l| \bar{\cE} \r.\r\}$ in \eqref{Eq:PoutSecondary2} can be assumed to be $1$ only when $P_p \gg P_s$. In Figs.~\ref{Fig:OutageProb} and \ref{Fig:OutageProb_vs_P_s}, we compare the outage probabilities obtained by simulations and those based on the  approximations in \eqref{Approx:PoutPrimary2} and \eqref{Approx:PoutSecondary}. It is observed that our approximations are quite accurate and thus Assumption~\ref{Assumption:Approximation} is validated.

In addition, it can be inferred from \eqref{Approx:PoutSecondary} and \eqref{Eq:Tau_s} that $\Pout^{(s)}$ is in general a decreasing function of $P_s$, since $\tau_s$ is a decreasing function of $P_s$. This implies that large ST transmission power $P_s$ is beneficial to reducing  the secondary network outage probability, although larger $P_s$ also increases the interference level from other active STs. This can be explained by the fact that  if $P_s$ is increased, the increase of  received signal power by the SR at the origin can be shown to be more significant than the increase of  interference power from all other active STs. On the other hand, from \eqref{Approx:PoutPrimary2} and \eqref{Eq:Tau_p}, it is analytically difficult to show whether $\Pout^{(p)}$ is a decreasing or increasing function of $P_s$. This is because in general there is a trade-off for setting $P_s$ to minimize the primary outage probability, since larger $P_s$ increases the interference level from active STs (resulting in larger $\Pout^{(p)}$) but at the same time reduces the ST transmission probability $p_t$ (see Fig.~\ref{Fig:TxProb}) and thus the number of active STs (resulting in smaller $\Pout^{(p)}$). In Fig.~\ref{Fig:OutageProb_vs_P_s}, we show the outage probabilities $\Pout^{(p)}$ and $\Pout^{(s)}$ versus $P_s$, respectively. It is observed that $\Pout^{(s)}$ is a decreasing function of $P_s$, whereas $\Pout^{(p)}$ is quite insensitive to the change of $P_s$.

\section{Network Throughput Maximization} \label{Sec:NetThroughput}
In this section, the spatial throughput of the secondary network defined in \eqref{Eq:NetThroughput} is investigated under  the primary and secondary outage constraints. To be more specific, with fixed $P_p$, $\lambda_p$, $r_g$, and $r_h$, the throughput of the secondary network $\cC_s$ is maximized over  $P_s$ and $\lambda_s$ under given $\epsilon_p$ and $\epsilon_s$. The optimization problem can thus be formulated as follows. 
\begin{align}
\label{max}\mathrm{(P1)}:~\mathop{\mathtt{max.}}_{P_s, \lambda_s} &~~   p_t\lambda_s \log_2(1+\theta_s)\\
\mathtt{s.t.}&~~  \Pout^{(p)} \leq \epsilon_p \\
&~~ \Pout^{(s)} \leq \epsilon_s,
\end{align}
where $\Pout^{(p)}$ and $\Pout^{(s)}$ are given by \eqref{Approx:PoutPrimary2} and \eqref{Approx:PoutSecondary}, respectively. With other parameters being fixed, the transmission probability $p_t$ is in general a function of $P_s$ (cf. Section~\ref{Sec:TxProb}). Thus, we denote $p_t$ as $p_t(P_s)$ in the sequel. 

Since $\log_2(1+\theta_s)$ in \eqref{max} is a constant and $\Pout^{(p)}$, $\Pout^{(s)}$ are monotonically increasing functions of $\tau_p$ and $\tau_s$, respectively (see \eqref{Approx:PoutPrimary2} and \eqref{Approx:PoutSecondary}), (P1) is equivalently expressed as 
\begin{align}
 \label{max2}\mathop{\mathtt{max.}}_{P_s, \lambda_s} &~~  p_t(P_s)\lambda_s \\
\mathtt{s.t.}&~~  \tau_p  \leq \mu_p  \label{Ineq:Tau_p} \\  
&~~  \tau_s \leq \mu_s, \label{Ineq:Tau_s}
\end{align}
where $\mu_p = -\ln(1-\epsilon_p)$ and $\mu_s = -\ln((1-\epsilon_s)p_g)$. Note that $\mu_p$ and $\mu_s$ are increasing functions of $\epsilon_p$ and $\epsilon_s$, respectively. In general, it is challenging to find a closed-form solution for \eqref{max2} with $\sigma^2>0$. However, if we assume that the network is primarily interference-limited, by setting $\sigma^2 = 0$, a closed-form solution for (P1) can be obtained as given in the following theorem.

\begin{theorem} \label{Prop:MaxNetThroughput:WithGuardZone}
Assuming $\sigma^2 = 0$, the maximum throughput of the secondary network is given by
\begin{equation} \label{Eq:MaxNetThroughput:WithGuardZone}
\cC_s^* = \frac{\mu_s(\mu_p-\phi\theta_p^{\frac{2}{\alpha}}d_p^2\lambda_p)}{\theta_s^{\frac{2}{\alpha}}d_s^2 \mu_p\phi} \log_2 (1+\theta_s),
\end{equation}
where the optimal ST transmit power is
\begin{equation} \label{Eq:OptTxPower:WithGuardZone}
P_s^* = \frac{\theta_s}{\theta_p}\l(\frac{d_s}{d_p}\r)^\alpha \l(\frac{\mu_s}{\mu_p}\r)^{-\frac{\alpha}{2}}P_p,
\end{equation}
and the optimal ST density is
\begin{equation} \label{Eq:OptPair:WithGuardZone}
\lambda_s^* = \frac{\mu_s(\mu_p-\phi\theta_p^{\frac{2}{\alpha}}d_p^2\lambda_p)}{p_t(P_s^*)\theta_s^{\frac{2}{\alpha}}d_s^2 \mu_p\phi}.
\end{equation}
\end{theorem}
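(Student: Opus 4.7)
The plan is to exploit a key structural observation: even though the objective $p_t(P_s)\lambda_s\log_2(1+\theta_s)$ and the two constraints in \eqref{Ineq:Tau_p}--\eqref{Ineq:Tau_s} both depend on $P_s$ in complicated ways, they couple $P_s$ and $\lambda_s$ \emph{only} through the effective active-ST density $x := p_t(P_s)\lambda_s$. Indeed, substituting $\sigma^2 = 0$ into \eqref{Eq:Tau_p} and \eqref{Eq:Tau_s}, I obtain $\tau_p = (\lambda_p + x(P_s/P_p)^{2/\alpha})\theta_p^{2/\alpha}d_p^2\varphi$ and $\tau_s = (\lambda_p(P_s/P_p)^{-2/\alpha} + x)\theta_s^{2/\alpha}d_s^2\varphi$. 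Thus the original problem is equivalent to
\begin{equation*}
\max_{x,\,P_s}~ x\log_2(1+\theta_s) \quad \text{s.t.} \quad x \leq U_1(P_s),\ x \leq U_2(P_s),
\end{equation*}
after which the optimal $\lambda_s^\star$ is recovered as $x^\star/p_t(P_s^\star)$. This decoupling sidesteps the difficulty that $p_t(P_s)$ has no simple closed form for general $M$.

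Next I would solve the constraints explicitly for $x$ to get $U_1(P_s) = (\mu_p - \varphi\theta_p^{2/\alpha}d_p^2\lambda_p)(\varphi\theta_p^{2/\alpha}d_p^2)^{-1}(P_s/P_p)^{-2/\alpha}$ from the primary constraint, and $U_2(P_s) = \mu_s(\varphi\theta_s^{2/\alpha}d_s^2)^{-1} - \lambda_p(P_s/P_p)^{-2/\alpha}$ from the secondary constraint. The monotonicities are transparent: since $\alpha > 2$, the map $P_s \mapsto (P_s/P_p)^{-2/\alpha}$ is strictly decreasing, so $U_1$ is strictly decreasing in $P_s$ whereas $U_2$ is strictly increasing. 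Therefore the pointwise minimum $\min(U_1(P_s),U_2(P_s))$, which governs the achievable $x$, is maximized exactly at the unique crossing $U_1(P_s^\star) = U_2(P_s^\star)$.

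Solving $U_1 = U_2$ algebraically is the key computation. Writing $z := (P_s/P_p)^{-2/\alpha}$, the crossing equation collapses nicely because the $\lambda_p$ terms cancel, leaving $z\,\mu_p/(\varphi\theta_p^{2/\alpha}d_p^2) = \mu_s/(\varphi\theta_s^{2/\alpha}d_s^2)$. Inverting the substitution $z = (P_s/P_p)^{-2/\alpha}$ yields $P_s^\star$ in closed form and matches \eqref{Eq:OptTxPower:WithGuardZone}. Substituting $z$ back into either $U_1$ or $U_2$ gives $x^\star = \mu_s(\mu_p - \varphi\theta_p^{2/\alpha}d_p^2\lambda_p)/(\mu_p\theta_s^{2/\alpha}d_s^2\varphi)$, multiplying by $\log_2(1+\theta_s)$ produces \eqref{Eq:MaxNetThroughput:WithGuardZone}, and dividing by $p_t(P_s^\star)$ produces \eqref{Eq:OptPair:WithGuardZone}.

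The main obstacle I anticipate is not in the calculus but in the \emph{justification} of the reformulation, namely verifying that (i) optimality indeed forces both constraints to be tight (so that $\min(U_1,U_2)$ is the correct proxy for $x$), and (ii) the problem is feasible in the non-trivial sense $x^\star > 0$, which requires $\mu_p > \varphi\theta_p^{2/\alpha}d_p^2\lambda_p$; otherwise the primary outage constraint is already violated by the PT-only interference and the secondary network cannot transmit at all. Part (i) follows because $x$ is the sole variable in the objective and both $U_1,U_2$ bound $x$ from above, so any slack can be absorbed by increasing $\lambda_s = x/p_t(P_s)$. With these two points addressed, the remainder is the routine algebra outlined above.
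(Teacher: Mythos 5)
Your proposal is correct and follows essentially the same route as the paper's own proof: the paper likewise reduces both outage constraints to upper bounds $f_1(P_s)$ and $f_2(P_s)$ on the effective density $p_t(P_s)\lambda_s$, uses the opposite monotonicities of $f_1$ and $f_2$ to place the optimum at their intersection, and solves $f_1=f_2$ in closed form once $\sigma^2=0$. Your additional remarks on why the intersection is attainable (absorbing slack into $\lambda_s$) and on the feasibility condition $\mu_p>\varphi\theta_p^{2/\alpha}d_p^2\lambda_p$ are correct refinements of what the paper argues via its illustration of the admissible set.
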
 
\begin{proof} 
See Appendix~\ref{Proof:Prop:MaxNetThroughput:WithGuardZone}.
\end{proof}

Note that since $p_t(P_s^*)$ has been obtained in close-form for the case of $0<P_s^* \leq 2 \eta P_p r_h^{-\alpha}$ (i.e., $M=1 \;\text{or}\; M=2$ in Section~\ref{Sec:TxProb}), the optimal ST density $\lambda_s^*$  in \eqref{Eq:OptPair:WithGuardZone} can be obtained exactly for this case, according to \eqref{Eq:TxProb:Case1} and \eqref{Eq:TxProb:Case2}. Otherwise, only upper and lower bounds on $\lambda_s^*$ can be obtained, based on \eqref{Bound:TxProb}.

\begin{figure}
\centering
\includegraphics[width=9cm]{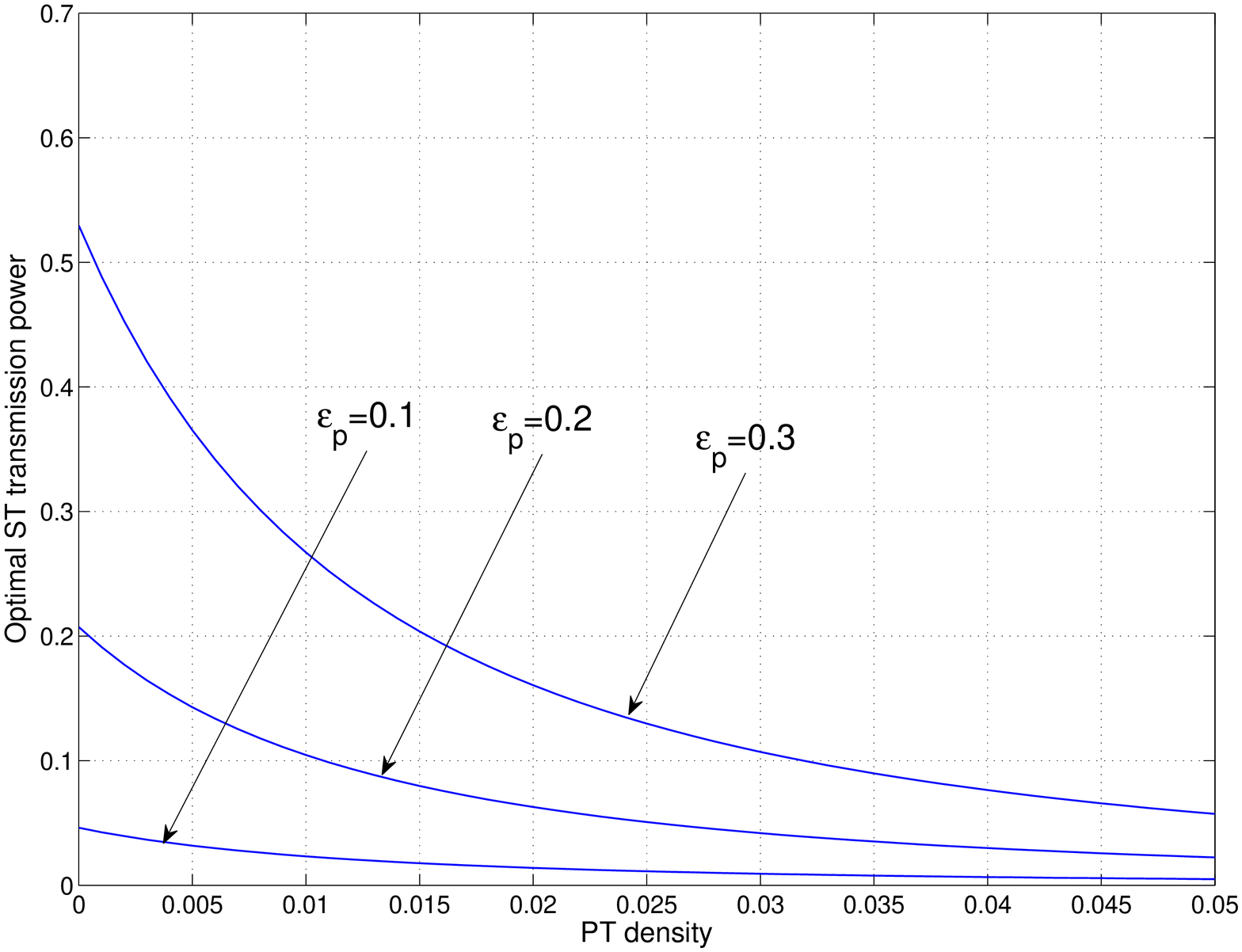}\vspace{5pt} 
\caption{Optimal ST transmission power $P_s^*$ versus PT density $\lambda_p$, with $\alpha=4$, $d_p=d_s=0.5$, $r_h=1$, $r_g=3$, $P_p=2$, $\epsilon_s=0.3$, and $\theta_p=\theta_s=5$.}
\label{Fig:P_s_vs_lambda_p}
\end{figure}

\begin{figure}
\centering
\includegraphics[width=9cm]{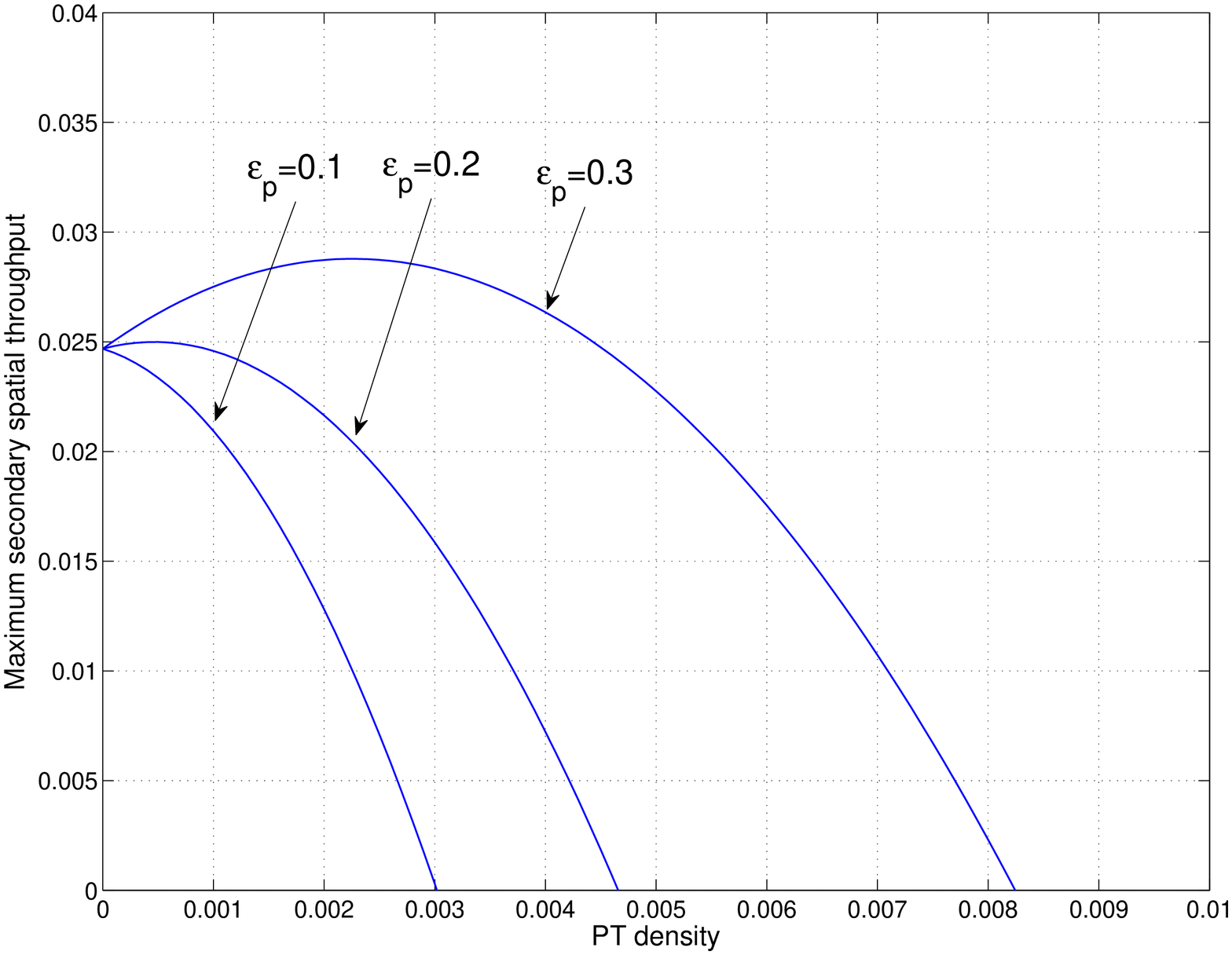}\vspace{5pt} 
\caption{Maximum secondary spatial throughput $\cC_s^*$ versus PT density $\lambda_p$, with $\alpha=4$, $d_p=d_s=0.5$, $r_h=1$, $r_g=3$, $P_p=2$, $\epsilon_s=0.3$, and $\theta_p=\theta_s=5$. }
\label{Fig:C_s_vs_lambda_p}
\end{figure}

\begin{figure}
\centering
\includegraphics[width=9cm]{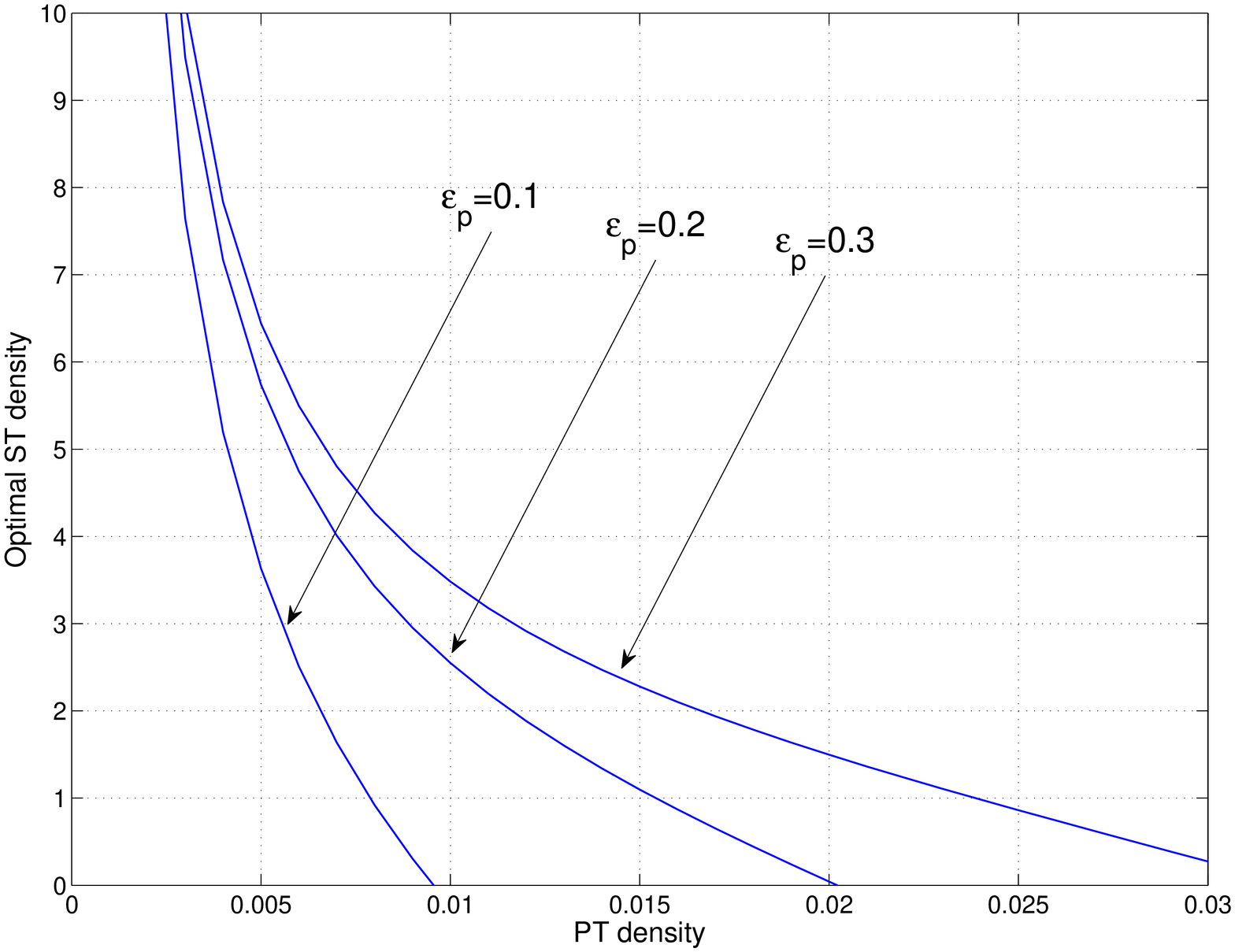}\vspace{5pt} 
\caption{Optimal ST density $\lambda_s^*$ versus PT density $\lambda_p$, with $\alpha=4$, $d_p=d_s=0.5$, $r_h=1$, $r_g=3$, $P_p=2$, $\epsilon_s=0.3$, and $\theta_p=\theta_s=5$. } 
\label{Fig:OptLambda2}
\end{figure}

Some remarks are in order.
\begin{itemize}

\item 
It is worth noting that $\mu_s = -\ln((1-\epsilon_s)p_g)$ in \eqref{Ineq:Tau_s} is an increasing function of PT density $\lambda_p$, since $p_g$ given in \eqref{Eq:GuardZoneRe:a} is a decreasing function of $\lambda_p$. Hence, the optimal ST transmission power $P_s^*$ given in \eqref{Eq:OptTxPower:WithGuardZone} decreases with increasing $\lambda_p$. This result is shown in Fig.~\ref{Fig:P_s_vs_lambda_p}, with three different values of $\epsilon_p$.

\item 
In Fig.~\ref{Fig:C_s_vs_lambda_p}, we show the maximum secondary spatial throughput $\cC_s^*$ given in  \eqref{Eq:MaxNetThroughput:WithGuardZone} versus $\lambda_p$ with $\epsilon_p=0.1$, $0.2$, or $0.3$. Note that from the perspective of RF energy harvesting, larger $\lambda_p$ is beneficial to the secondary network throughput. However, it is observed that if $\epsilon_p=0.1$, $\cC_s^*$ decreases with $\lambda_p$, whlie for $\epsilon_p=0.2$ or $0.3$, $\cC_s^*$ first increases with $\lambda_p$ when $\lambda_p$ is small but eventually starts to decrease when $\lambda_p$ exceeds a certain threshold. The reason of this phenomenon can be explained as follows. When $\epsilon_p$ is small as compared with $\epsilon_s$ (e.g., $\epsilon_p=0.1$ in Fig.~\ref{Fig:C_s_vs_lambda_p}), the constraint in  \eqref{Ineq:Tau_p} prevails over that in \eqref{Ineq:Tau_s}, i.e., satisfying \eqref{Ineq:Tau_p} is sufficient to satisfy \eqref{Ineq:Tau_s}, but not vice versa. Therefore, in this case, if $\lambda_p$ is increased, the active STs' density $p_t\lambda_s$ or $\cC_s^*$ will be decreased to reduce $\tau_p$ in \eqref{Ineq:Tau_p}, i.e.,  reducing the network interference level. However, when $\epsilon_p$ is relatively larger (e.g., $\epsilon_p=0.2$ or $0.3$ in Fig.~\ref{Fig:C_s_vs_lambda_p}), \eqref{Ineq:Tau_s} prevails over \eqref{Ineq:Tau_p}. As a result, if $\lambda_p$ is increased, then so is $\mu_s$ in \eqref{Ineq:Tau_s}, and thus $p_t\lambda_s$ or $\cC_s^*$ will be increased. However, if $\lambda_p$ exceeds a certain threshold, $p_t\lambda_s$ will be decreased to reduce $\tau_s$ in \eqref{Ineq:Tau_s}; as a result, $\cC_s^*$ decreases with increasing $\lambda_p$.

\item It is revealed from \eqref{Eq:OptPair:WithGuardZone} that for given $\lambda_p$, the optimal active STs' density $p_t(P_s^*)\lambda_s^*$ is fixed under a given pair of primary and secondary outage constraints. In other words, $\lambda_s^*$ is inversely proportional to $p_t(P_s^*)$. This implies that as $p_t$ converges to zero with $\lambda_p\rightarrow 0$ (see Fig.~\ref{Fig:TxProbVSPTDen}), $\lambda_s^*$ diverges to infinity at the same time, as shown in Fig.~\ref{Fig:OptLambda2}. Thus, although the sparse PT density will lead to  larger secondary network throughput (see Fig.~\ref{Fig:C_s_vs_lambda_p}), a correspondingly large number of STs  need to be deployed to achieve the maximum throughput, each with a very small transmission probability $p_t$. As a result, only a small fraction of the STs could be active at any time, resulting in large delay for secondary transmissions or inefficient secondary network design.

\end{itemize}

\section{Application and Extension} \label{Sec:Application}
In this section, we extend our results on the CR network to the application scenario depicted in Fig.~\ref{Fig:NetworkModel_wo}, where a set of distributed wireless power chargers (WPCs) are deployed to power wireless information transmitters (WITs) in a sensor network. It is assumed that wireless power transmission from WPCs to WITs is over a dedicated band which is different from that for the information transfer, and thus does not interfere with wireless information receivers (WIRs). For simplicity, we assume that the path-loss exponents for both the power transmission and information transmission are equal to $\alpha$. Moreover, the network models for WPCs and WITs as well as the energy harvesting and transmission models of WITs are similarly assumed as in Section~\ref{Sec:SysMod} for PTs and STs in the CR setup. For convenience, we thus use the same symbol notations for PTs and STs to represent for WPCs and WITs, respectively.

\subsection{Transmission Probability}
Unlike the CR case, WITs in a sensor network do not need to be prevented from  transmissions by guard zones, since there are no PTs present. As a result, a WIT can transmit at any time provided that it is fully charged. By letting $r_g = 0$, we have $p_g = 1$, and from \eqref{Eq:TxProb:Case1}, \eqref{Eq:TxProb:Case2} and \eqref{Bound:TxProb} we obtain the transmission probability of a typical WIT in the following corollary. 
\begin{corollary} \label{Corollary:TxProb:WithoutGuardZone}
The transmission probability of a typical WIT is given by
\begin{enumerate}
\item If $0<P_s \leq \eta P_p r_h^{-\alpha}$ or $M=1$,
\begin{equation}
p_t = \frac{p_h}{1+p_h}.
\end{equation}

\item If $\eta P_p r_h^{-\alpha} < P_s \leq 2\eta P_p  r_h^{-\alpha}$ or $M=2$,
\begin{equation}
p_t = \frac{p_h}{p_h + 1 + \frac{p_{2}}{p_h}}.
\end{equation}

\item If $P_s > 2\eta P_p  r_h^{-\alpha}$ or $M>2$,
\begin{equation}
\frac{p_1 + p'_2}{p_1+p'_2 + 1 + \frac{p'_2}{p_1+p'_2}} \leq p_t \leq \frac{p_h}{p_h + 1+\frac{p'_2+p_3}{p_h}},
\end{equation}
\end{enumerate}
\end{corollary}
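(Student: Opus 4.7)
The plan is to recognize that the sensor-network setup of Fig.~\ref{Fig:NetworkModel_wo} is precisely the $r_g=0$ specialization of the CR model analyzed in Section~\ref{Sec:TxProb}. Since $p_g = e^{-\pi r_g^2 \lambda_p}$, setting $r_g = 0$ yields $p_g = 1$. Operationally, this is exactly what the corollary's setting demands: a fully charged WIT is never blocked by a guard zone and will always transmit in the next slot. So the three expressions in the corollary should follow by substituting $p_g = 1$ into \eqref{Eq:TxProb:Case1_2}, \eqref{Eq:TxProb:Case2_2}, and \eqref{Bound:TxProb} respectively.

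To make this rigorous, I would first verify that each Markov chain model of Section~\ref{Sec:TxProb} remains well-defined at $p_g=1$. The chains encode battery dynamics with harvesting governed by $p_h$ (or its decomposition into $p_1,p_2,p_1',p_2',p_3$) and with the ``transmit-and-empty'' transition from the fully charged state carrying probability $p_g$. Setting $p_g=1$ simply turns the self-loop at the top state into a deterministic downward transition; the stochastic matrices $\mathbf{P}_1$ and $\mathbf{P}_2$ remain irreducible and the steady-state equations still admit a unique solution. Hence, the steady-state analysis carries through verbatim, and the fully-charged probabilities $\pi_{1,1}$ and $\pi_{2,2}$ are obtained exactly as before, now evaluated at $p_g=1$.

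For cases 1 and 2 of the corollary, I would then simply plug $p_g=1$ into \eqref{Eq:TxProb:Case1_2} and \eqref{Eq:TxProb:Case2_2} and observe that $p_t=p_f p_g=p_f$. This immediately produces $p_t=\frac{p_h}{1+p_h}$ for $M=1$ and $p_t=\frac{p_h}{p_h+1+p_2/p_h}$ for $M=2$. For case 3, the same substitution in the two bounds of \eqref{Bound:TxProb} yields the stated sandwich on $p_t$; here I rely on the fact that the overestimation/underestimation argument of Proposition~\ref{Theorem:Bound:TxProb} is independent of $p_g$ and thus transfers directly.

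There is no real obstacle: the corollary is a clean specialization rather than a new result. The only mild point warranting mention is that one must confirm, as noted above, that the Markov chain remains a valid model at $p_g=1$ (no degenerate absorbing behavior and irreducibility is preserved), so that the steady-state identities used in Propositions~\ref{Prop:TxProb:Case1}--\ref{Theorem:Bound:TxProb} still apply. Once this is observed, the proof reduces to substitution.
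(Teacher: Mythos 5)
Your proposal is correct and matches the paper's own (implicit) argument exactly: the paper derives Corollary~\ref{Corollary:TxProb:WithoutGuardZone} by setting $r_g=0$ so that $p_g=1$ and substituting into \eqref{Eq:TxProb:Case1_2}, \eqref{Eq:TxProb:Case2_2}, and \eqref{Bound:TxProb}. Your extra check that the Markov chains remain irreducible at $p_g=1$ is a sensible (if not strictly necessary) addition, but the route is the same.
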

where $p_h=1-e^{-\pi r_h^2 \lambda_p}$ is given in \eqref{Eq:p_h}; $p_1=1-e^{-\pi h_1^2 \lambda_p}$ and  $p_2=e^{-\pi h_1^2 \lambda_p} - e^{-\pi r_h^2 \lambda_p}$ are given in \eqref{Eq:p_1} and \eqref{Eq:p_2}, respectively; $p'_2=e^{-\pi \lambda_p h_1^2} - e^{-\pi \lambda_p h_2^2}$ and $p_3=e^{-\pi \lambda_p h_2^2} - e^{-\pi \lambda_p r_h^2}$ are given in \eqref{Eq:p'_2} and \eqref{Eq:p_3}, respectively.

It is worth noting that unlike the CR setup, $p_t$ in this case is in general an increasing function of $\lambda_p$ since there are no guard zones and thus larger $\lambda_p$ always help charge WITs more frequently, as shown in Fig.~\ref{Fig:TxProbVSPTDen:WithoutGuardZone}. 

\begin{figure}
\centering
\includegraphics[width=9.2cm]{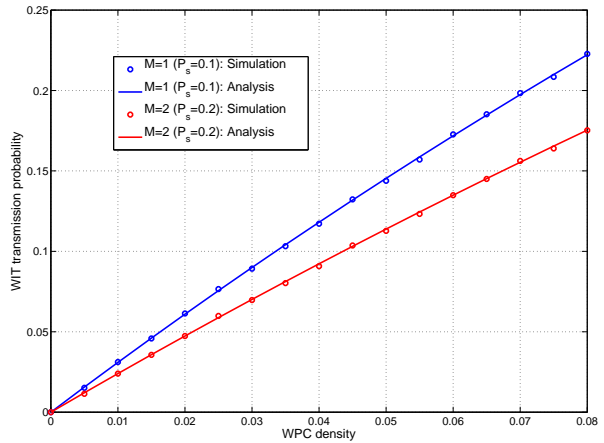}\vspace{5pt} 
\caption{WIT transmission probability $p_t$ versus WPC density $\lambda_p$, with $\alpha=4$, $\eta=0.1$, $r_h=1$, $r_g=3$, and $P_p=1$.}
\label{Fig:TxProbVSPTDen:WithoutGuardZone}
\end{figure}

\subsection{Network Throughput Maximization}
Note that unlike the CR setup, here we only need to consider the outage probability of a typical WIR at the origin due to the interference of other active WITs. Similar to Assumption~\ref{Assumption:Approximation}, we assume that active WITs form an HPPP with density $p_t\lambda_s$; thus, the outage probability of a typical WIR at the origin can be obtained by simplifying Lemma~\ref{Lemma:Approx:PoutPrimary2} as
\begin{align} \label{Eq:PoutSecondary:WithoutGuardZone}
\Pout^{(s)} & = \Pr\l\{\frac{g_s P_s d_s^{-\alpha}}{I_s + \sigma^2} < \theta_s\r\} \\
&=  1-\exp \l(-\tau_s \r), \label{Eq:PoutIT}
\end{align}
where in this case $\tau_s$ is given by
\begin{equation} \label{Eq:TildeTau_s}
\tau_s = \theta_s^{\frac{2}{\alpha}}d_s^2\varphi \,p_t\lambda_s + \frac{\theta_s d_s^\alpha \sigma^2}{P_s}.
\end{equation}

For the sensor network throughput maximization, Problem (P1) can be modified such that only the outage constraint for the WIR  is applied. Thus we have the following simplified problem.
\begin{align}
\mathrm{(P2)}:~\mathop{\mathtt{max.}}_{P_s, \lambda_s} &~~   p_t(P_s)\lambda_s \log_2(1+\theta_s)\\
\mathtt{s.t.}&~~ \Pout^{(s)} \leq \epsilon_s.
\end{align}
The solution of (P2) is given in the following corollary, based on Theorem~\ref{Prop:MaxNetThroughput:WithGuardZone}. 

\begin{corollary} \label{Prop:MaxNetThroughput:WithoutGuardZone}
Assuming $\sigma^2 = 0$, the maximum network throughput is given by
\begin{equation} \label{Eq:MaxNetThroughput:WithoutGuardZone}
\cC_s^* =  \frac{\mu'_s}{\theta_s^{\frac{2}{\alpha}}d_s^2\varphi}\log_2 (1+\theta_s),
\end{equation}
where $\mu'_s = -\ln(1-\epsilon_s)$, and the optimal solution $(P_s^*,\lambda_s^*) \in \mR_+\times \mR_+$ is any pair satisfying
\begin{equation} \label{Eq:OptPoint:WithoutGuardZone}
p_t(P_s^*)\lambda_s^* = \frac{\mu'_s}{\theta_s^{\frac{2}{\alpha}}d_s^2\varphi}.
\end{equation}
\end{corollary}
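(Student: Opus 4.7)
The plan is to mirror the argument used for Theorem~\ref{Prop:MaxNetThroughput:WithGuardZone}, but with a single outage constraint, since the absence of guard zones eliminates the primary outage constraint entirely. First, I would convert $\Pout^{(s)}\leq\epsilon_s$ into an equivalent linear constraint on $\tau_s$. Because $\Pout^{(s)}=1-\exp(-\tau_s)$ from \eqref{Eq:PoutIT} is strictly increasing in $\tau_s$, the outage constraint is equivalent to $\tau_s\leq\mu'_s$ with $\mu'_s=-\ln(1-\epsilon_s)$, which matches the definition in the statement.

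Next, setting $\sigma^2=0$ in \eqref{Eq:TildeTau_s} collapses $\tau_s$ to $\theta_s^{2/\alpha}d_s^2\varphi\,p_t(P_s)\lambda_s$, so (P2) reduces to
\begin{equation*}
\max_{P_s,\lambda_s}\; p_t(P_s)\lambda_s\log_2(1+\theta_s)\quad\text{s.t.}\quad \theta_s^{2/\alpha}d_s^2\varphi\,p_t(P_s)\lambda_s\leq\mu'_s.
\end{equation*}
The crucial observation is that both the objective and the (only) constraint depend on the two decision variables exclusively through the single scalar quantity $p_t(P_s)\lambda_s$. The problem therefore degenerates to maximizing this product subject to an upper bound, and the maximum is attained with equality: $p_t(P_s^*)\lambda_s^*=\mu'_s/(\theta_s^{2/\alpha}d_s^2\varphi)$, which is precisely \eqref{Eq:OptPoint:WithoutGuardZone}. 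Substituting back into the objective yields \eqref{Eq:MaxNetThroughput:WithoutGuardZone}.

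There is no serious obstacle, but one interpretive point is worth emphasizing in the write-up: unlike Theorem~\ref{Prop:MaxNetThroughput:WithGuardZone}, where two active outage constraints pinned down $P_s^*$ and $\lambda_s^*$ separately, here the optimum is not unique. Any pair $(P_s^*,\lambda_s^*)\in\mR_+\times\mR_+$ whose product $p_t(P_s^*)\lambda_s^*$ hits the allowed upper bound is optimal, because $P_s$ influences the objective only through $p_t(P_s)$. Using the closed-form expressions for $p_t(P_s)$ from Corollary~\ref{Corollary:TxProb:WithoutGuardZone} (exact for $M=1,2$ and bounded for $M>2$), one can then express $\lambda_s^*$ in closed form once $P_s^*$ is fixed, but the statement as given only asserts the product relation in \eqref{Eq:OptPoint:WithoutGuardZone}, which requires no case analysis on $M$ to establish.
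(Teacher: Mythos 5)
Your proposal is correct and follows essentially the same route as the paper's proof: rewrite the single outage constraint as $\tau_s\leq\mu'_s$, set $\sigma^2=0$ so that both objective and constraint depend only on the product $p_t(P_s)\lambda_s$, and conclude the maximum is attained with equality. Your added remark on the non-uniqueness of $(P_s^*,\lambda_s^*)$ is consistent with the statement and a helpful clarification, but the underlying argument is the same.
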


\begin{proof}
With $\sigma^2 = 0$, from \eqref{Eq:PoutIT} and \eqref{Eq:TildeTau_s}, Problem (P2) can be equivalently rewritten as 
\begin{align}
\max_{P_s, \lambda_s} & \quad  p_t(P_s) \lambda_s\\
\mbox{s.t.}& \quad p_t(P_s)\lambda_s \leq \frac{\mu'_s}{\theta_s^{\frac{2}{\alpha}}d_s^2\varphi}, \label{Eq:P_2}
\end{align}
where $\mu'_s = -\ln(1-\epsilon_s)$. To maximize $p_t(P_s)\lambda_s$, then it is easy to see from \eqref{Eq:P_2} that the optimal solution is $p_t(P_s^*)\lambda_s^*=\frac{\mu'_s}{\theta_s^{\frac{2}{\alpha}}d_s^2\varphi}$; by multiplying it with $\log_2(1+\theta_s)$, we then obtain $\cC_s^*$ in \eqref{Eq:MaxNetThroughput:WithoutGuardZone}.

\end{proof}
Note that unlike the result in Theorem~\ref{Prop:MaxNetThroughput:WithGuardZone}, the maximum network throughput remains constant regardless of $\lambda_p$. This is because there is no primary outage constraint in this case and thus the optimal density of active WITs $p_t(P_s^*)\lambda_s^*$ is determined solely by the outage constraint of WIRs. On the other hand, if $\lambda_p$ is increased, we can effectively reduce the required WIT density $\lambda_s^*$ for achieving the same $\cC_s^*$ since $p_t$ in general  increases with $\lambda_p$.

\section{Conclusion} \label{Sec:Con}
In this paper, we have proposed a novel network architecture enabling secondary users to harvest energy as well as  reuse the spectrum of primary users in the CR network. Based on stochastic-geometry models and certain assumptions, our study revealed useful insights to optimally design the RF energy powered CR network. We derived the transmission probability of a secondary transmitter by considering the effects of both the guard zones and harvesting zones, and thereby characterized the maximum secondary network throughput under the given outage constrains for primary and secondary users, and the corresponding optimal secondary transmit power and transmitter density in closed-form. Moreover, we showed that our result can also be applied to the wireless sensor network powered by a distributed WPC network, or other similar wireless powered communication networks.

\appendices 
\section{Proof of Proposition~\ref{Theorem:Bound:TxProb}} \label{Proof:Theorem:Bound:TxProb}
For both the upper and lower bounds, similar to the case of $M=2$, we apply a 3-state Markov chain with state space $\{0,1,2\}$ with states $0$, $1$ and $2$ denoting the battery power level of $0$, in the range $[\frac{1}{2}P_s, P_s)$, and equal to $P_s$, respectively. First, consider the upper bound on $p_t$. Since the harvested power in the region $a(X,h_2,r_h)$ is assumed to be equal to $\frac{1}{2}P_s$, it is easy to see that the state transition-probability matrix for this case is given by 
\begin{equation} \label{Eq:Matrix:Upper}
\mathbf{P}^{(u)}=
\l[\begin{array}{ccc}
1-p_h & p'_2 + p_3 & p_{1}\\0 & 1-p_h & p_h  \\ p_g & 0 & 1-p_g\end{array} \r]. 
\end{equation}
Let $\boldsymbol{\pi}^{(u)}=[\pi^{(u)}_{0}, \pi^{(u)}_{1}, \pi^{(u)}_{2}]$ denote the steady-state probability vector in this case. Solving $\boldsymbol{\pi}^{(u)}\mathbf{P}^{(u)} = \boldsymbol{\pi}^{(u)}$, we obtain $\pi^{(u)}_{2} = \frac{p_h}{p_h + p_g\l(1+\frac{p'_2 + p_3}{p_h}\r)}$ and thus the upper bound on $p_t$ can be obtained by multiplying $\pi^{(u)}_{2}$ with $p_g$, according to \eqref{Eq:TxProb:Def}.

Next, consider the lower bound on $p_t$. Since the harvested power in the region $a(X,h_2,r_h)$ is assumed to be $0$, it is easy to obtain the state transition-probability matrix for this case as
\begin{equation} \label{Eq:Matrix:Lower}
\mathbf{P}^{(l)}=
\l[\begin{array}{ccc}
1-(p_1+p'_2) & p'_2& p_{1}\\0 & 1-(p_1+p'_2) & p_1+p'_2  \\ p_g & 0 & 1-p_g\end{array} \r]. 
\end{equation}
Similarly to the derivation of the upper bound on $p_t$, the lower bound on $p_t$ can be found by finding the corresponding steady-state probability $\pi^{(l)}_{2} = \frac{p_1+p'_2}{(p_1+p'_2) + p_g\l(1+\frac{p'_2}{p_1+p'_2}\r)}$, and then multiplying it with $p_g$. The proof of Proposition~\ref{Theorem:Bound:TxProb} is thus completed.

\section{Proof of Lemma~\ref{Lemma:Approx:PoutPrimary2}} \label{Proof:Lemma:Approx:PoutPrimary2}
For convenience, we derive the non-outage probability $1-\Pout^{(p)}$ as follows with $\Pout^{(p)}$ given in \eqref{Eq:PoutPrimary}.
\begin{align}
1-\Pout^{(p)} & = \Pr\l\{\frac{g_p P_p d_p^{-\alpha}}{I_p + I_s + \sigma^2} \geq \theta_p\r\}  \label{Approx:PoutPrimaryProof}\\
& = \Pr\l\{g_p \geq \frac{\theta_p d_p^\alpha}{P_p}\l(I_p + I_s + \sigma^2\r)\r\} \\
& = \E_{I_p}\l[\E_{I_s}\l[\exp\l({-\frac{\theta_p d_p^\alpha}{P_p}\l(I_p + I_s + \sigma^2\r)}\r)\r]\r] \\
& = \exp\l({-\frac{\theta_p d_p^\alpha}{P_p}\sigma^2}\r)\E_{I_p}\l[\exp\l({-\frac{\theta_p d_p^\alpha}{P_p}I_p}\r)\r]\nn\\&\qquad\qquad\qquad\E_{I_s}\l[\exp\l({-\frac{\theta_p d_p^\alpha}{P_p} I_s }\r)\r],  \label{Approx:PoutPrimaryProof2}
\end{align}
where in \eqref{Approx:PoutPrimaryProof2}, the expectations are separated since $I_p$ and $I_s$ are assumed to be independent as a result of  Assumption~\ref{Assumption:Approximation}. Note that $\E_{I_p}\l[\exp\l({-\frac{\theta_p d_p^\alpha}{P_p}I_p}\r)\r]$ and $\E_{I_s}\l[\exp\l({-\frac{\theta_p d_p^\alpha}{P_p} I_s}\r)\r]$ are Laplace transforms in terms of the random variables $I_p$ and $I_s$, respectively, both with input parameter $\frac{\theta_p d_p^\alpha}{P_p}$. According to the result in \cite[3.21]{Haenggi:InterferenceLargeWirelessNetworks:2008}, the Laplace transform of the shot-noise process of an HPPP $\Lambda(\lambda)$ with density $\lambda>0$, denoted by $I = \sum_{T\in\Lambda(\lambda)}g_T P |T|^{-\alpha}$, with input parameter $s$ is given by
\begin{equation} \label{Eq:Laplace}
\E_I[\exp(-sI)] = \exp(-(Ps)^{\frac{2}{\alpha}} \lambda \phi),
\end{equation}
where $\{g_T\}_{T\in\Lambda(\lambda)}$ is a set of i.i.d. exponential random variables with mean $1$, and $\phi$ is given in Lemma~\ref{Lemma:Approx:PoutPrimary2}.
Using \eqref{Eq:Laplace}, we can easily obtain $\E_{I_p}\l[\exp\l({-\frac{\theta_p d_p^\alpha}{P_p}I_p}\r)\r]$ and $\E_{I_s}\l[\exp\l({-\frac{\theta_p d_p^\alpha}{P_p} I_s}\r)\r]$ and by substituting them to \eqref{Approx:PoutPrimaryProof2}, the proof of Lemma~\ref{Lemma:Approx:PoutPrimary2} is thus completed.

\section{Proof of Lemma~\ref{Lemma:Approx:PoutSecondary}} \label{Proof:Lemma:Approx:PoutSecondary}
The term $\Pr\l\{\frac{g_s P_s d_s^{-\alpha}}{I_p+I_{s} + \sigma^2}<\theta_s\r\}$ in \eqref{Eq:PoutSecondary2} is obtained by following the similar procedure in the proof of Lemma~\ref{Lemma:Approx:PoutPrimary2}, given by 
\begin{equation} \label{Eq:Superposition2}
\Pr\l\{\frac{g_s P_s d_s^{-\alpha}}{I_p+I_s + \sigma^2}<\theta_s\r\} = 1-\exp(\tau_s),
\end{equation}
where $\tau_s$ is given in \eqref{Eq:Tau_s}.

Next, under the assumption $P_p \gg P_s$, it is reasonable to assume that the interference from even only one single PT inside $b(Y_o, r_g)$ is sufficient to cause an outage to the typical SR at the origin. Consequently, we have $\Pr\l\{\frac{g_s P_s d_s^{-\alpha}}{I_p+I_{s} + \sigma^2} <\theta_s \l| \bar{\cE} \r.\r\} \approx 1$. Substituting this result, \eqref{Eq:Superposition2} and $\Pr\{\cE\}=e^{-\pi r_g^2 \lambda_p}=1-p_g$ into \eqref{Eq:PoutSecondary2}  yields   \eqref{Approx:PoutSecondary}. The proof of Lemma~\ref{Lemma:Approx:PoutSecondary} is thus completed.

\begin{figure}
\centering
\includegraphics[width=9cm]{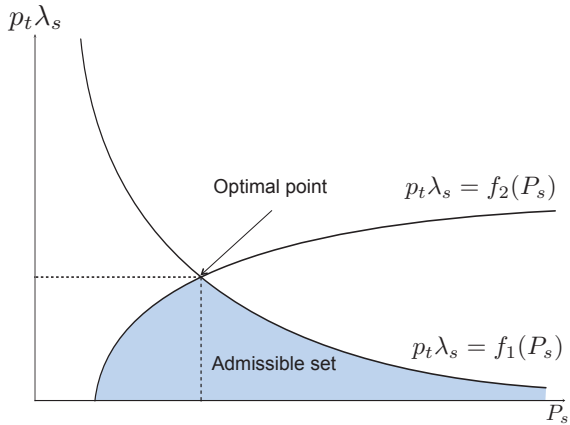}\vspace{5pt} 
\caption{Illustration of the optimal solution for Problem (P1).}\label{Fig:AdmissibleSet}
\end{figure}

\section{Proof of Theorem~\ref{Prop:MaxNetThroughput:WithGuardZone}} \label{Proof:Prop:MaxNetThroughput:WithGuardZone}
From \eqref{Eq:Tau_p} and \eqref{Eq:Tau_s}, the constraints $\tau_p \leq \mu_p$ and $\tau_s \leq \mu_s$ given in \eqref{Ineq:Tau_p} and \eqref{Ineq:Tau_s} are equivalent to $p_t(P_s)\lambda_s \leq f_1(P_s)$ and $p_t(P_s)\lambda_s \leq f_2(P_s)$, respectively, where
\begin{equation}\label{Eq:f_1} \vspace{-5pt}
f_1(P_s) =\l[\frac{1}{\theta_p^{\frac{2}{\alpha}}d_p^2\varphi}\l(\mu_p - \frac{\theta_p d_p^\alpha \sigma^2}{P_p}\r) - \lambda_p\r] \l(\frac{P_s}{P_p}\r)^{-\frac{2}{\alpha}},
\end{equation}
\begin{equation}
f_2(P_s) =  \frac{1}{\theta_s^{\frac{2}{\alpha}}d_s^2\varphi}\l(\mu_s - \frac{\theta_s d_s^\alpha \sigma^2}{P_s}\r) - \lambda_p \l(\frac{P_s}{P_p}\r)^{-\frac{2}{\alpha}}.
\end{equation}
As illustrated in Fig.~\ref{Fig:AdmissibleSet}, $f_1(P_s)$ decreases whereas $f_2(P_s)$ increases with growing $P_s$. The shaded region in Fig.~\ref{Fig:AdmissibleSet} shows the admissible set of $(P_s, p_t\lambda_s)$ that satisfies the given outage probability constraints. It is observed that the optimal value of $p_t(P_s)\lambda_s$ is the intersection of the two curves $p_t(P_s)\lambda_s = f_1(P_s)$ and $p_t(P_s)\lambda_s = f_2(P_s)$. The intersection point can be found by solving $f_1(P_s) = f_2(P_s)$, which has no closed-form solution in general with $\sigma^2 >0$. However, by letting $\sigma^2=0$, the closed-form solution of $P_s^*$ can be obtained as $\frac{\theta_s}{\theta_p}\l(\frac{d_s}{d_p}\r)^\alpha \l(\frac{\mu_s}{\mu_p}\r)^{-\frac{\alpha}{2}}P_p$. From $p_t(P_s^*)\lambda_s^* = f_1(P_s^*)$ and \eqref{Eq:f_1}, we then obtain $p_t(P_s^*)\lambda_s^* = \frac{\mu_s(\mu_p-\phi\theta_p^{\frac{2}{\alpha}}d_p^2\lambda_p)}{\theta_s^{\frac{2}{\alpha}}d_s^2 \mu_p\phi}$, and accordingly $\lambda_s^* = \frac{\mu_s(\mu_p-\phi\theta_p^{\frac{2}{\alpha}}d_p^2\lambda_p)}{p_t(P_s^*)\theta_s^{\frac{2}{\alpha}}d_s^2 \mu_p\phi}$. Theorem~\ref{Prop:MaxNetThroughput:WithGuardZone} is thus proved.

\bibliographystyle{ieeetr}

\begin{thebibliography}{1}

\bibitem{Le:EfficientFarFiledRFEnergyHarvesting:2008}
T. Le, K. Mayaram, and T. Fiez, ``Efficient far-field radio frequency energy harvesting for passively powered sensor networks,'' {\em IEEE J. Solid-State Circuits}, vol. 43, no. 5, pp. 1287-1302, May 2008.

\bibitem{Zungeru:RFEnergyHarvesingManagementWSN:2012}
A. M. Zungeru,  L. M. Ang,  S. Prabaharan, and K. P. Seng, ``Radio frequency energy harvesting and management for wireless sensor networks,'' {\em Green Mobile Devices and Netw.: Energy Opt. Scav. Tech.}, CRC Press, pp. 341-368, 2012.

\bibitem{Vullers:MicropowerEnergyHarvesting:2009}
R. J. M. Vullers, R. V. Schaijk, I. Doms, C. V. Hoof, and R. Mertens, ``Micropower energy harvesting,'' {\em Elsevier Solid-State Circuits}, vol. 53,   no. 7, pp. 684-693, July 2009.

\bibitem{Bouchouicha:AmbientRFEnergyHarvesting:2010}
D. Bouchouicha, F. Dupont, M. Latrach, and L. Ventura, ``Ambient RF energy harvesting,'' {\em Int. Conf. Renew. Energies and Power Qual. (ICREPQ)}, Mar. 2010.

\bibitem{Paing:ResistorEmulation:2008}
T. Paing, J. Shon, R. Zane, and Z. Popovic, ``Resistor emulation approach to low-power RF energy harvesting,'' {\em IEEE Trans. Power Elect.}, vol. 23, no. 3, pp. 1494-1501, May 2008.

\bibitem{Jabbar:RFEnergyHarvestingMobile:2010}
H. Jabbar, Y. S. Song, and T. T. Jeong, ``RF energy harvesting system and circuits for charging of mobile devices,'' {\em IEEE Trans. Consumer Elect.}, vol. 56, no. 1,  pp. 247-253, Feb. 2010.

\bibitem{Ho&Zhang:OptimalEnergyAllocation:2012}
C. K. Ho and R. Zhang, ``Optimal energy allocation for wireless communications with energy harvesting constraints,'' {\em  IEEE Trans. Sig. Process.}, vol. 60, no. 9, pp. 4808-4818, Sep. 2012. 

\bibitem{Ozel:EnergyHarvesting:2011}
O. Ozel, K. Tutuncouglu, J. Yang, S. Ulukus, and A. Yener, ``Transmission with energy harvesting nodes in fading wireless channels: optimal policies,'' {\em IEEE J. Sel. Areas Commun.}, vol. 29, no. 8, pp.1732-1743, Sep. 2011.
  
\bibitem{Huang:ThroughputAdHocEnergyHarvesting:2012}
K. Huang, ``Spatial throughput of mobile ad hoc networks with energy harvesting,'' {\em to appear in IEEE Trans. Inf. Theory (Available on-line at http://arxiv.org/abs/1111.5799)}.

\bibitem{NiyatoHossainFallahi:Solar:2007}
D. Niyato, E. Hossain, and A. Fallahi, ``Sleep and wakeup strategies in solar-powered wireless sensor/mesh networks: performance analysis and optimization,'' {\em IEEE Trans. Mobile Computing}, vol. 6, no. 2, pp. 221-236, Feb. 2007. 

\bibitem{Brown:WirelessPowerTransfer:1984}
W. C. Brown, ``The history of power transmission by radio waves,'' {\em IEEE Trans. Microwave Theory and Tech.}, vol. 32, pp. 1230-1242, Sep. 1984.
  
\bibitem{Varshney:InformationEnergyTransfer:2008}
L. R. Varshney, ``Transporting information and energy simultaneously,'' in {\em Proc. IEEE Int. Symp. Inf. Theory (ISIT)}, pp. 1612-1616, July 2008.

\bibitem{Grover:InformationEnergyTransfer:2010}
P. Grover and A. Sahai, ``Shannon meets Tesla: wireless information and power transfer,'' in {\em Proc. IEEE Int. Symp. Inf. Theory (ISIT)}, pp. 2363-2367, June 2010.

\bibitem{Zhang:MIMOBroadcastingPowerTransfer}
R. Zhang and C. K. Ho, ``MIMO broadcasting for simultaneous wireless information and power transfer,'' {\em IEEE Trans. Wireless Commun.}, vol. 12, no. 5, pp. 1989-2001,  May 2013. 

\bibitem{Zhou:WirelessPowerArchitectureDesign}
X. Zhou, R. Zhang, and C. K. Ho, ``Wireless information and power transfer: architecture design and rate-energy tradeoff,'' {\em submitted to IEEE Trans. Commun. (Available on-line at http://arxiv.org/abs/1205.0618)}

\bibitem{Liu:OpportunisticEnergyHarvesting}
L. Liu, R. Zhang, and K. C. Chua, ``Wireless information transfer with opportunistic energy harvesting,'' {\em IEEE Trans. Wireless Commun.}, vol. 12, no. 1, pp. 288-300, Jan. 2013.  

\bibitem{Huang:WirelessPowerCellular}
K. Huang and V. K. N. Lau, ``Enabling wireless power transfer in cellular networks: architecture, modeling and deployment,'' {\em submitted for publication. (Available on-line at http://arxiv.org/abs/1207.5640)} 

\bibitem{Haykin:CognitiveRadio:2005}
S. Haykin, ``Cognitive radio: brain-empowered wireless communications,'' {\em IEEE J. Sel. Areas Commun.}, vol. 23, no. 2,  
  pp. 201-220, Feb. 2005.

\bibitem{QingZhao:DynamicSpectrumAccess:2007}
Q. Zhao and B. M. Sadler, ``A survey of dynamic spectrum access,'' {\em IEEE Sig. Process. Mag.}, vol. 24, no. 3,  pp. 79-89, May 2007.
  
\bibitem{RuiZhang:DynamicResourceAllocation:2010}
R. Zhang, Y. C. Liang, and S. Cui, ``Dynamic resource allocation in cognitive radio networks,'' {\em IEEE Sig. Process. Mag.}, vol. 27, no. 3, pp. 102-114, May 2010.

\bibitem{Yin:ScalingLaws:2010}
C. Yin, L. Gao, and S. Cui, ``Scaling laws for overlaid wireless networks: a cognitive radio network versus a primary network,'' {\em IEEE/ACM Trans. Netw.}, vol. 18, no. 4, pp. 1317-1329, Aug. 2010.

\bibitem{Huang:SpectrumSharing:2009}
K. Huang, V. K. N. Lau, and Y. Chen, ``Spectrum sharing between cellular and mobile ad hoc networks: transmission-capacity trade-off,'' {\em IEEE J. Sel. Areas Commun.}, vol. 27, no. 7, pp. 1256-1267, Sep. 2009.

\bibitem{Lee:SpectrumSharingTxCapacity:2011}
J. Lee, J. G. Andrews, and D. Hong, ``Spectrum sharing transmission capacity,'' {\em IEEE Trans. Wireless Commun.}, vol. 10, no. 9,
  pp. 3053-3063, Sep. 2011.  
  
\bibitem{Lee:PoissonCognitiveNetworks:2012}
C. H. Lee and M. Haenggi, ``Interference and outage in Poisson cognitive networks,'' {\em IEEE Trans. Wireless Commun.}, vol. 11, no. 4,   pp. 1392-1401, Apr. 2012.  

\bibitem{Xie:SensorNetworksImmortal:2012}
L. Xie, Y. Shi, Y. T. Hou, and H. D. Sherali, ``Making sensor networks immortal: an energy-renewal approach with wireless power transfer,'' {\em IEEE/ACM Trans. Netw.}, vol. 20, no. 6, pp. 1748-1761, Dec. 2012.

\bibitem{Haenggi:InterferenceLargeWirelessNetworks:2008}
M. Haenggi and R. K. Ganti, ``Interference in large wireless networks,'' {\em Found. Trends in Netw.}, NOW Publishers, vol. 3, no. 2, pp. 127-248, 2008. 

\bibitem{Weber:TransmissionCapacity:2012}
S. Weber and J. G. Andrews, ``Transmission capacity of wireless networks,'' {\em Found. Trends in Netw.}, NOW Publishers, vol. 5, no. 2-3, pp. 109-281, 2012. 

\bibitem{PoissonProcesses}
J. F. C. Kingman, \emph{Poisson processes}. Oxford University Press, 1993.

\end{thebibliography}

\end{document}